\title{Axioms for Behavioural Congruence \\
       of Single-Pass Instruction Sequences}
\author{J.A. Bergstra \and C.A. Middelburg}
\institute{Informatics Institute, Faculty of Science, University of
           Amsterdam \\
           Science Park~904, 1098~XH Amsterdam, the Netherlands \\
           \email{J.A.Bergstra@uva.nl,C.A.Middelburg@uva.nl}}
\begin{document}
\maketitle

\begin{abstract}
In program algebra, an algebraic theory of single-pass instruction 
sequences, three congruences on instruction sequences are paid attention 
to: instruction sequence congruence, structural congruence, and 
behavioural congruence.
Sound and complete axiom systems for the first two congruences were 
already given in early papers on program algebra.
The current paper is the first one that is concerned with an axiom 
system for the third congruence.
The presented axiom system is especially notable for its axioms that 
have to do with forward jump instructions.
\begin{keywords} 
program algebra, instruction sequence congruence, structural congruence, 
behavioural congruence, axiom system. 
\end{keywords}
\end{abstract}

\section{Introduction}
\label{sect-intro}

Program algebra, an algebraic theory of single-pass instruction 
sequences, was first presented in~\cite{BL02a} as the basis of an 
approach to programming language semantics.
Various issues, including issues relating to programming language 
expressive\-ness, computability, computational complexity, algorithm 
efficiency, algorithmic equivalence of programs, program verification, 
program perform\-ance, program compactness, and program parallelization, 
have been studied in the setting of program algebra since then.
An overview of all the work done to date and some open questions 
originating from it can be found at~\cite{SiteIS}.
Three congruences on instruction sequences were introduced 
in~\cite{BL02a}: instruction sequence congruence, structural congruence 
and behavioural congruence.
Sound and complete axiom systems for instruction sequence congruence 
and structural congruence were already provided in~\cite{BL02a}, but an 
axiom system for behavioural congruence has never been provided.
This paper is concerned with an axiom system for behavioural 
congruence.

Program algebra is parameterized by a set of uninterpreted basic 
instructions.
In applications of program algebra, this set is instantiated by a set of 
interpreted basic instructions.
In the case of most issues that have been studied in the setting of 
program algebra, the interpreted basic instructions are instructions to 
set and get the content of Boolean registers.
In the case of a few issues, the interpreted basic instructions are 
other instructions, e.g.\ instructions to manipulate the content of 
counters or instructions to manipulate the content of Turing machine 
tapes (see e.g.~\cite{BM09k}).

In the uninstantiated case, behavioural congruence is the coarsest 
congruence respecting the behaviour produced by instruction sequences 
under execution that is possible with uninterpreted basic instructions.
In the instantiated cases, behavioural congruence is the coarsest 
congruence respecting the behaviour produced by instruction sequences 
under execution that is possible taking the intended interpretation of 
the basic instructions into account.
In this paper, an emphasis is laid on the uninstantiated case.
Yet attention is paid to the instantiation in which all possible 
instructions for Boolean registers are taken as basic instructions.

The single-pass instruction sequences considered in program algebra are
non-empty, finite or eventually periodic infinite instruction sequences.
In this paper, the soundness question, i.e.\ the question whether 
derivable equality implies behavioural congruence, is fully answered in 
the affirmative.
However, the completeness question, i.e.\ the question whether 
behavioural congruence implies derivable equality, is answered in the 
affirmative only for the restriction to finite instruction sequences
because of problems in mastering the intricacy of a completeness proof 
for the unrestricted case.   

In~\cite{BL02a}, basic thread algebra, an algebraic theory of 
mathematical objects that model in a direct way the behaviours produced 
by instruction sequences under execution, was introduced to describe 
which behaviours are produced by the instruction sequences considered 
in program algebra.%
\footnote
{In~\cite{BL02a}, basic thread algebra is introduced under the name
 basic polarized process algebra.}
It is rather awkward to describe and analyze the behaviours of this 
kind using algebraic theories of processes such as 
ACP~\cite{BW90,BK84b}, CCS~\cite{HM85,Mil89} and CSP~\cite{BHR84,Hoa85}. 
However, the objects considered in basic thread algebra can be viewed as 
representations of processes as considered in ACP 
(see e.g.~\cite{BM11c}).
Basic thread algebra is parameterized by a set of uninterpreted basic 
actions and, when it is used for describing the behaviours produced by 
instruction sequences under execution, basic instructions are taken as 
basic actions.
Like in~\cite{BL02a}, basic thread algebra will be used in this paper 
for describing the behaviours produced by the instruction sequences 
considered in program algebra and to define the notion of behavioural 
congruence of instruction sequences.

This paper is organized as follows.
First, we introduce a version of program algebra with axioms for 
instruction sequence congruence, structural congruence, and behavioural 
congruence (Section~\ref{sect-PGA}).
Next, we present the preliminaries on basic thread algebra that are 
needed in the rest of the paper (Section~\ref{sect-BTA}).
After that, we describe which behaviours are produced by instruction 
sequences under execution and define a notion of behavioural congruence 
for instruction sequences (Section~\ref{sect-TE-BC}).
Then, we go into the soundness and completeness of the presented axiom 
system with respect to the defined notion of behavioural congruence 
(Section~\ref{sect-PGA-BC}).
Following this, we look at the instantiation of program algebra in which 
all possible instructions for Boolean registers are taken as basic 
instructions (Section~\ref{sect-PGAbr}).
Finally, we make some concluding remarks (Section~\ref{sect-concl}).

The following should be mentioned in advance.
The set $\Bool$ of Boolean values is a set with two elements whose 
intended interpretations are the truth values \emph{false} and 
\emph{true}.
As is common practice, we represent the elements of $\Bool$ by the bits 
$0$ and $1$.

This paper draws somewhat from the preliminaries of earlier papers that 
built on program algebra and basic thread algebra.
The most recent one of the papers in question is~\cite{BM15a}.

\section{Program Algebra for Behavioural Congruence}
\label{sect-PGA}

In this section, we present \PGAbc.
\PGAbc\ is a version of \PGA\ (ProGram Algebra) with, in addition to the 
usual axioms for instruction sequence congruence and structural 
congruence, axioms for behavioural congruence.

The instruction sequences considered in \PGAbc\ are single-pass 
instruction sequences of a particular kind.%
\footnote
{The instruction sequences concerned are single-pass in the sense that 
they are instruction sequences of which each instruction is executed at 
most once and can be dropped after it has been executed or jumped over.}
It is assumed that a fixed but arbitrary set $\BInstr$ of 
\emph{basic instructions} has been given.
$\BInstr$ is the basis for the set of instructions that may occur in 
the instruction sequences considered in \PGAbc.
The intuition is that the execution of a basic instruction may modify a
state and must produce a Boolean value as reply at its completion.
The actual reply may be state-dependent.

The set of instructions of which the instruction sequences considered 
in \PGAbc\ are composed is the set that consists of the following 
elements:
\begin{itemize}
\item
for each $a \in \BInstr$, a \emph{plain basic instruction} $a$;
\item
for each $a \in \BInstr$, a \emph{positive test instruction} $\ptst{a}$;
\item
for each $a \in \BInstr$, a \emph{negative test instruction} $\ntst{a}$;
\item
for each $l \in \Nat$, a \emph{forward jump instruction} $\fjmp{l}$;
\item
a \emph{termination instruction} $\halt$.
\end{itemize}
We write $\PInstr$ for this set.
The elements from this set are called \emph{primitive instructions}.

Primitive instructions are the elements of the instruction sequences 
considered in \PGAbc.
On execution of such an instruction sequence, these primitive 
instructions have the following effects:
\begin{itemize}
\item
the effect of a positive test instruction $\ptst{a}$ is that basic
instruction $a$ is executed and execution proceeds with the next
primitive instruction if $\True$ is produced and otherwise the next
primitive instruction is skipped and execution proceeds with the
primitive instruction following the skipped one --- if there is no
primitive instruction to proceed with,
inaction occurs;
\item
the effect of a negative test instruction $\ntst{a}$ is the same as
the effect of $\ptst{a}$, but with the role of the value produced
reversed;
\item
the effect of a plain basic instruction $a$ is the same as the effect
of $\ptst{a}$, \linebreak[2] but execution always proceeds as if $\True$ 
is produced;
\item
the effect of a forward jump instruction $\fjmp{l}$ is that execution
proceeds with the $l$th next primitive instruction --- if $l$ equals $0$ 
or there is no primitive instruction to proceed with, inaction occurs;
\item
the effect of the termination instruction $\halt$ is that execution 
terminates.
\end{itemize}
Inaction occurs if no more basic instructions are executed, but 
execution does not terminate.

\PGAbc\ has one sort: the sort $\InSeq$ of \emph{instruction sequences}. 
We make this sort explicit to anticipate the need for many-sortedness
later on.
To build terms of sort $\InSeq$, \PGAbc\ has the following constants and 
operators:
\begin{itemize}
\item
for each $u \in \PInstr$, 
the \emph{instruction} constant $\const{u}{\InSeq}$\,;
\item
the binary \emph{concatenation} operator 
$\funct{\ph \conc \ph}{\InSeq \x \InSeq}{\InSeq}$\,;
\item
the unary \emph{repetition} operator 
$\funct{\ph\rep}{\InSeq}{\InSeq}$\,.
\end{itemize}
Terms of sort $\InSeq$ are built as usual in the one-sorted case.
We assume that there are infinitely many variables of sort $\InSeq$, 
including $X,Y,Z$.
We use infix notation for concatenation and postfix notation for
repetition.

A \PGAbc\ term in which the repetition operator does not occur is called 
a \emph{repetition-free} \PGAbc\ term.

One way of thinking about closed \PGAbc\ terms is that they represent 
non-empty, finite or eventually periodic infinite sequences of 
primitive instructions.%
\footnote
{An eventually periodic infinite sequence is an infinite sequence with
 only finitely many distinct suffixes.}
The instruction sequence represented by a closed term of the form
$t \conc t'$ is the instruction sequence represented by $t$
concatenated with the instruction sequence represented by $t'$.
The instruction sequence represented by a closed term of the form 
$t\rep$ is the instruction sequence represented by $t$ concatenated 
infinitely many times with itself.
A closed \PGAbc\ term represents a finite instruction sequence if and 
only if it is a closed repetition-free \PGAbc\ term.

In this paper, closed \PGAbc\ terms are considered equal if the 
instruction sequences that they represent can always take each other's 
place in an instruction sequence in the sense that the behaviour 
produced under execution remains the same irrespective of the 
interpretation of the instructions from $\BInstr$.
In other words, equality of closed terms stands in \PGAbc\ for a kind of
behavioural congruence of the represented instruction sequences.
The kind of behavioural congruence in question will be made precise in 
Section~\ref{sect-TE-BC}.

The axioms of \PGAbc\ are given in Table~\ref{axioms-PGAbc}.%
\begin{table}[!p]
\caption{Axioms of \PGAbc}
\label{axioms-PGAbc}
\begin{seqntbl}
\renewcommand{\arraystretch}{1.37}
\begin{axcol}
(X \conc Y) \conc Z = X \conc (Y \conc Z)             & \axiom{PGA1}  \\
(X^n)\rep = X\rep                                     & \axiom{PGA2}  \\
X\rep \conc Y = X\rep                                 & \axiom{PGA3}  \\
(X \conc Y)\rep = X \conc (Y \conc X)\rep             & \axiom{PGA4} 
\eqnsep
\fjmp{k{+}1} \conc u_1 \conc \ldots \conc u_k \conc \fjmp{0} =
\fjmp{0} \conc u_1 \conc \ldots \conc u_k \conc \fjmp{0} 
                                                      & \axiom{PGA5}  \\
\fjmp{k{+}1} \conc u_1 \conc \ldots \conc u_k \conc \fjmp{l} =
\fjmp{l{+}k{+}1} \conc u_1 \conc \ldots \conc u_k \conc \fjmp{l}
                                                      & \axiom{PGA6}  \\
(\fjmp{l{+}k{+}1} \conc u_1 \conc \ldots \conc u_k)\rep =
(\fjmp{l} \conc u_1 \conc \ldots \conc u_k)\rep       & \axiom{PGA7}  \\
\fjmp{l{+}k{+}k'{+}2} \conc u_1 \conc \ldots \conc u_k \conc
(v_1 \conc \ldots \conc v_{k'{+}1})\rep = {} \\ \phantom{{}{+}k'}
\fjmp{l{+}k{+}1} \conc u_1 \conc \ldots \conc u_k \conc
(v_1 \conc \ldots \conc v_{k'{+}1})\rep               & \axiom{PGA8} 
\eqnsep
\ptst{a} \conc \fjmp{0} \conc \fjmp{0} = 
a \conc \fjmp{0} \conc \fjmp{0}                       & \axiom{PGA9} \\
\ntst{a} \conc \fjmp{0} \conc \fjmp{0} = 
a \conc \fjmp{0} \conc \fjmp{0}                       & \axiom{PGA10} \\
\ptst{a} \conc \fjmp{1} = a \conc \fjmp{1}            & \axiom{PGA11} \\
\ntst{a} \conc \fjmp{1} = a \conc \fjmp{1}            & \axiom{PGA12} \\
\ptst{a} \conc \fjmp{l{+}2} \conc \fjmp{l{+}1} = 
a \conc \fjmp{l{+}2} \conc \fjmp{l{+}1}               & \axiom{PGA13} \\
\ntst{a} \conc \fjmp{l{+}2} \conc \fjmp{l{+}1} = 
a \conc \fjmp{l{+}2} \conc \fjmp{l{+}1}               & \axiom{PGA14} \\
\ptst{a} \conc \halt \conc \halt = 
a \conc \halt \conc \halt                             & \axiom{PGA15} \\
\ntst{a} \conc \halt \conc \halt = 
a \conc \halt \conc \halt                             & \axiom{PGA16} \\
\ptst{a} \conc u\rep = a \conc u\rep                  & \axiom{PGA17} \\
\ntst{a} \conc u\rep = a \conc u\rep                  & \axiom{PGA18} \\
\fjmp{k{+}3} \conc \fjmp{k{+}3} \conc \fjmp{k{+}3} \conc 
u_1 \conc \ldots \conc u_k \conc \ptst{a} = {} 
\ptst{a} \conc \fjmp{k{+}3} \conc \fjmp{k{+}3} \conc 
u_1 \conc \ldots \conc u_k \conc \ptst{a}             & \axiom{PGA19} \\
\fjmp{k{+}3} \conc \fjmp{k{+}3} \conc \fjmp{k{+}3} \conc 
u_1 \conc \ldots \conc u_k \conc \ntst{a} = {} 
\ntst{a} \conc \fjmp{k{+}3} \conc \fjmp{k{+}3} \conc 
u_1 \conc \ldots \conc u_k \conc \ntst{a}             & \axiom{PGA20} \\
\fjmp{k{+}2} \conc \fjmp{k{+}2} \conc 
u_1 \conc \ldots \conc u_k \conc a = 
a \conc \fjmp{k{+}2} \conc u_1 \conc \ldots \conc u_k \conc a
                                                      & \axiom{PGA21} \\
\fjmp{k{+}k'{+}4} \conc u_1 \conc \ldots \conc u_k \conc 
\ptst{a} \conc \fjmp{k'{+}3} \conc \fjmp{k'{+}3} \conc 
v_1 \conc \ldots \conc v_{k'} \conc \ptst{a} = {} \\ \phantom{{}{+}k'}
\fjmp{k{+}1} \conc u_1 \conc \ldots \conc u_k \conc 
\ptst{a} \conc \fjmp{k'{+}3} \conc \fjmp{k'{+}3} \conc 
v_1 \conc \ldots \conc v_{k'} \conc \ptst{a}          & \axiom{PGA22} \\
\fjmp{k{+}k'{+}4} \conc u_1 \conc \ldots \conc u_k \conc 
\ntst{a} \conc \fjmp{k'{+}3} \conc \fjmp{k'{+}3} \conc 
v_1 \conc \ldots \conc v_{k'} \conc \ntst{a} = {} \\ \phantom{{}{+}k'}
\fjmp{k{+}1} \conc u_1 \conc \ldots \conc u_k \conc 
\ntst{a} \conc \fjmp{k'{+}3} \conc \fjmp{k'{+}3} \conc 
v_1 \conc \ldots \conc v_{k'} \conc \ntst{a}          & \axiom{PGA23} \\
\fjmp{k{+}k'{+}3} \conc u_1 \conc \ldots \conc u_k \conc 
a \conc \fjmp{k'{+}2} \conc  
v_1 \conc \ldots \conc v_{k'} \conc a = {} \\ \phantom{{}{+}k'}
\fjmp{k{+}1} \conc u_1 \conc \ldots \conc u_k \conc 
a \conc \fjmp{k'{+}2} \conc  
v_1 \conc \ldots \conc v_{k'} \conc a                 & \axiom{PGA24} \\
\fjmp{k{+}1} \conc u_1 \conc \ldots \conc u_k \conc \halt =
\halt \conc u_1 \conc \ldots \conc u_k \conc \halt    & \axiom{PGA25} \\
\fjmp{k{+}1} \conc (u_1 \conc \ldots \conc u_k \conc u)\rep =
(u \conc u_1 \conc \ldots \conc u_k)\rep              & \axiom{PGA26} \\
(\fjmp{k{+}2} \conc \fjmp{k{+}1} \conc 
 u_1 \conc \ldots \conc u_k \conc \ptst{a})\rep = {} 
(a \conc \fjmp{k{+}1} \conc 
 u_1 \conc \ldots \conc u_k \conc a)\rep              & \axiom{PGA27} \\
(\fjmp{k{+}2} \conc \fjmp{k{+}1} \conc 
 u_1 \conc \ldots \conc u_k \conc \ntst{a})\rep = {} 
(a \conc \fjmp{k{+}1} \conc 
 u_1 \conc \ldots \conc u_k \conc a)\rep              & \axiom{PGA28} \\
(\fjmp{k{+}2} \conc \fjmp{k{+}1} \conc 
 u_1 \conc \ldots \conc u_k \conc a)\rep = {} 
(a \conc \fjmp{k{+}1} \conc 
 u_1 \conc \ldots \conc u_k \conc a)\rep              & \axiom{PGA29} \\
(u_1 \conc \ldots \conc u_{k{+}1})\rep = a\rep \\
\quad\;\;\; \mathrm{if},\; 
\mathrm{for\;all}\; i \in \set{1,\ldots,k{+}1},\;
 u_i \in \set{a,\ptst{a},\ntst{a}}\; \mathrm{or},\; 
 \mathrm{for\;some}\; l \in \set{1,\ldots,k},\;
\\ \phantom{\quad\;\;\; \mathrm{if},\; {}}  
  u_i \equiv \fjmp{l}\; \mathrm{and}\;
  u_{(i+l) \mathrm{mod} (k+1)} \in \set{a,\ptst{a},\ntst{a}}                   
                                                      & \axiom{PGA30} 
\end{axcol}
\end{seqntbl}
\end{table}
In this table, 
$n$ stands for an arbitrary natural number from $\Natpos$,%
\footnote
{We write $\Natpos$ for the set $\set{n \in \Nat \where n \geq 1}$ of
positive natural numbers.}
$u$, $u_1,\ldots,u_k$ and $v_1,\ldots,v_{k'+1}$ stand for arbitrary 
primitive instructions from $\PInstr$,
$k$, $k'$, and $l$ stand for arbitrary natural numbers from $\Nat$, and 
$a$ stands for an arbitrary basic instruction from $\BInstr$.
For each $n \in \Natpos$, the term $t^n$, where $t$ is a \PGAbc\ term, 
is defined by induction on $n$ as follows: $t^1 = t$, and 
$t^{n+1} = t \conc t^n$.

If $t = t'$ is derivable from PGA1--PGA4, then $t$ and $t'$ represent 
the same instruction sequence.
In this case, we say that the represented instruction sequences are 
\emph{instruction sequence congruent}.
We write \PGAisc\ for the algebraic theory whose sorts, constants and
operators are those of \PGAbc, but whose axioms are PGA1--PGA4.

The \emph{unfolding equation} $X\rep = X \conc X\rep$ is derivable from
the axioms of \PGAisc\ by first taking the instance of PGA2 in which 
$n = 2$, then applying PGA4, and finally applying the instance of PGA2 
in which $n = 2$ again.

A closed \PGAbc\ term is in \emph{first canonical form} if it is of the 
form $t$ or $t \conc {t'}\rep$, where $t$ and $t'$ are closed 
repetition-free \PGAbc\ terms.
The following proposition relates \PGAisc\ and first canonical forms.
\begin{proposition}
\label{prop-1CF}
For all closed \PGAbc\ terms $t$, there exists a closed \PGAbc\ term $t'$ 
that is in first canonical form such that $t = t'$ is derivable from
the axioms of \PGAisc.
\end{proposition}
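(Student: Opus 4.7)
The plan is structural induction on the closed \PGAbc\ term $t$. For the base case, an instruction constant $u$ is already a closed repetition-free term, hence in first canonical form (with the trailing $t'\rep$ absent).

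For $t = t_1 \conc t_2$, the induction hypothesis gives first canonical forms $s_1, s_2$ with $t_i = s_i$ derivable from PGA1--PGA4, and so $t = s_1 \conc s_2$. One of three disjoint situations then applies, according to whether $s_1$ and $s_2$ contain a repetition. If both are repetition-free, the concatenation is itself repetition-free and we are done. If $s_1$ is repetition-free while $s_2 = r \conc r'\rep$, PGA1 regroups $s_1 \conc (r \conc r'\rep)$ as $(s_1 \conc r) \conc r'\rep$, which is in first canonical form. If $s_1 = r \conc r'\rep$, then PGA1 and PGA3 give $(r \conc r'\rep) \conc s_2 = r \conc (r'\rep \conc s_2) = r \conc r'\rep$, and this works irrespective of the shape of $s_2$.

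For $t = t_1\rep$, the induction hypothesis again replaces $t_1$ by a first canonical form $s_1$. If $s_1$ is repetition-free, the unfolding equation $s_1\rep = s_1 \conc s_1\rep$, which is derivable from PGA2 and PGA4 as observed in the paragraph just before the proposition, already exhibits $t$ in first canonical form. If $s_1 = r \conc r'\rep$, I first unfold once to obtain $(r \conc r'\rep)\rep = r \conc r'\rep \conc (r \conc r'\rep)\rep$, and then apply PGA3 with $X$ instantiated as $r'$ to absorb the trailing repetition, leaving $r \conc r'\rep$ in first canonical form.

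The main obstacle, or rather the only step that calls for a little care, is this last sub-case: unfolding a repetition under another repetition superficially threatens to produce an ever-growing term. What saves the argument is PGA3, which lets the inner $r'\rep$ swallow everything that follows, so the derivation closes up rather than diverging. Axioms PGA2 and PGA4 enter the proof only indirectly, through their role in deriving the unfolding equation; the main work is done by PGA1, PGA3, and that derived equation.
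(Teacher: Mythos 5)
Your proof is correct and is essentially the standard argument: structural induction on $t$, using PGA1 and PGA3 to normalize concatenations and the derived unfolding equation (from PGA2 and PGA4) plus PGA3 to normalize repetitions. The paper itself only delegates to the analogous Lemma~2.2 of~\cite{BM12b}, whose proof proceeds by the same structural induction, so you have merely filled in the details the authors omit.
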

\begin{proof}
The proof is analogous to the proof of Lemma~2.2 from~\cite{BM12b}.
\qed 
\end{proof}

If $t = t'$ is derivable from PGA1--PGA8, then $t$ and $t'$ represent 
the same instruction sequence after changing all chained jumps into 
single jumps and making all jumps ending in the repeating part as short 
as possible if they are eventually periodic infinite sequences.
In this case, we say that the represented instruction sequences are 
\emph{structurally congruent}.
We write \PGAsc\ for the algebraic theory whose sorts, constants and
operators are those of \PGAbc, but whose axioms are PGA1--PGA8.

A closed \PGAbc\ term $t$ \emph{has chained jumps} if there exists a 
closed \PGAbc\ term $t'$ such that $t = t'$ is derivable from the axioms
of \PGAisc\ and $t'$ contains a subterm of the form 
$\fjmp{n{+}1} \conc u_1 \conc \ldots \conc u_n \conc \fjmp{l}$.
A closed \PGAbc\ term $t$ that is in first canonical form 
\emph{has a repeating part} if it is of the form
$u_1 \conc \ldots \conc u_m \conc (v_1 \conc \ldots \conc v_k)\rep$.
A closed \PGAbc\ term $t$ of the form
$u_1 \conc \ldots \conc u_m \conc (v_1 \conc \ldots \conc v_k)\rep$
\emph{has shortest possible jumps ending in the repeating part} if:
(i)~for each $i \in [1,m]$ for which $u_i$ is of the form $\fjmp{l}$,
$l \leq k + m - i$;
(ii)~for each $j \in [1,k]$ for which $v_j$ is of the form $\fjmp{l}$,
$l \leq k - 1$.
A closed \PGAbc\ term is in \emph{second canonical form} if it is in first 
canonical form, does not have chained jumps, and has shortest
possible jumps ending in the repeating part if it has a repeating part.
The following proposition relates \PGAsc\ and second canonical forms.
\begin{proposition}
\label{prop-2CF}
For all closed \PGAbc\ terms $t$, there exists a closed \PGAbc\ term $t'$ 
that is in second canonical form such that $t = t'$ is derivable from 
the axioms of \PGAsc.
\end{proposition}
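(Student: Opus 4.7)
The plan is to start from Proposition~\ref{prop-1CF}, which gives a closed \PGAbc\ term $t_1$ in first canonical form with $t = t_1$ derivable in \PGAisc, and hence in \PGAsc. So I may assume $t_1$ is either $u_1 \conc \ldots \conc u_m$ or $u_1 \conc \ldots \conc u_m \conc (v_1 \conc \ldots \conc v_k)\rep$. The remaining task is to use PGA5--PGA8 to turn $t_1$ into second canonical form, i.e., to remove all chained jumps and, in the eventually periodic case, to shorten the jumps that end in the repeating part.

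For chain elimination I would orient PGA5 and PGA6 as rewrite rules replacing a subterm of the form $\fjmp{n{+}1} \conc w_1 \conc \ldots \conc w_n \conc \fjmp{l}$ by one in which the leading jump is single, namely by $\fjmp{0} \conc w_1 \conc \ldots \conc w_n \conc \fjmp{0}$ when $l = 0$ and by $\fjmp{l{+}n{+}1} \conc w_1 \conc \ldots \conc w_n \conc \fjmp{l}$ when $l > 0$. In the finite prefix these rewrites terminate because each application strictly moves the target of the leading jump to the right. For the repeating part I would additionally use PGA4 to rotate the period so that the chained jump of interest sits at its head, apply PGA5/PGA6, and then use PGA7 to bring any jump whose length has grown past the period back into $\set{0,\ldots,k-1}$; in particular, PGA7 with $l = 0$ collapses a jump of length exactly $k$ into $\fjmp{0}$, so condition~(ii) of the definition of shortest-possible jumps is met automatically once chain elimination in the repeating part is finished. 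I would process the repeating part first and the finite prefix afterwards, so that rewrites in the prefix never reintroduce chains at the boundary with the repeating part. Once no chained jumps remain, I would finally apply PGA8 to each forward jump in the finite prefix whose target lies strictly past the first copy of the period; each application shortens such a jump by $k$, so after finitely many steps condition~(i) also holds. The overall derivation is then the concatenation of the derivation of $t = t_1$ provided by Proposition~\ref{prop-1CF} with the rewrite sequence just described.

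The main obstacle I anticipate is the termination of chain elimination inside the repeating part: PGA6 strictly increases jump lengths, while PGA7 decreases them by multiples of $k$, so the two rules must be interleaved and justified by a well-chosen measure. A workable choice is a lexicographic pair whose first component counts the chained-jump positions of the unfolded sequence, strictly decreased by any PGA6 step whose new direct target is not a jump (with the understanding that when the chain closes back into a cycle of total length a multiple of $k$ it is PGA7, and ultimately PGA5, that removes the last chained position), and whose second component is the sum of the jump distances in one period taken modulo $k$, strictly decreased by any PGA7 step. One also has to verify that no rewrite reintroduces a chain at another position, which follows from the fact that PGA5, PGA6, and PGA7 all preserve the unfolded-sequence target of every jump they do not rewrite.
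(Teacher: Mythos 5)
Your overall route---first canonical form via Proposition~\ref{prop-1CF}, then chain elimination with PGA5/PGA6 and jump shortening with PGA7/PGA8, using PGA4 to rotate the repeating part---is the expected one; the paper gives no details here and simply defers to the proof of Lemma~2.3 of an earlier paper, so there is nothing to compare against beyond the strategy. For the repetition-free part your argument is fine: chains in a finite instruction sequence are acyclic, so the ``target moves strictly to the right'' argument terminates. The genuine gap is exactly where you locate it, namely termination of chain elimination in and into the repeating part, and the measure you propose does not close it. First, your second component, ``the sum of the jump distances in one period taken modulo $k$'', is \emph{invariant} under PGA7: a PGA7 step changes a single jump length by exactly the period length, so the sum modulo the period length does not move. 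Second, and more seriously, a PGA6 step whose new direct target is still a jump decreases neither component: the set of chained positions is unchanged and the sum of jump lengths strictly \emph{increases}. These are precisely the steps that occur while a long chain is being resolved, so the lexicographic pair fails on the only steps that matter.

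The difficulty your parenthetical remark glosses over is that a chain inside the repeating part may be cyclic, i.e., infinite in the unfolded sequence; then no finite number of PGA6 applications ever produces a non-jump target, and the position must instead be resolved to $\fjmp{0}$, which requires detecting that the chain has returned to a position congruent to an earlier one modulo the period length (typically after enlarging the period with PGA2 so that the whole cycle fits inside one period) before PGA7 and PGA5 can be invoked. A workable repair is to drop the global rewrite-termination argument: for each jump position of the prefix plus one period, compute its ultimate target in the unfolded sequence (a non-jump position, ``past the end'', or ``cyclic''---a finite computation per position), resolve each position by a bounded burst of PGA6/PGA5 steps (respectively the PGA2/PGA7/PGA5 derivation of $\fjmp{0}$ in the cyclic case) followed by PGA7/PGA8 normalisation, and check that resolving one position never un-resolves another; the induction is then on the number of unresolved positions rather than on a syntactic measure of the term. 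As written, your termination argument does not go through.
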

\begin{proof}
The proof is analogous to the proof of Lemma~2.3 from~\cite{BM12b}.
\qed 
\end{proof}

If $t = t'$ is derivable from PGA1--PGA30, then $t$ and $t'$ represent 
instruction sequences that can always take each other's place in an 
instruction sequence without affecting the behaviour produced under 
execution in an essential way.
In this case, we say that the represented instruction sequences are 
\emph{behaviourally congruent}.
In Section~\ref{sect-TE-BC}, we will use basic thread algebra to make 
precise which behaviours are produced by the represented instruction 
sequences under execution.

Axioms PGA1--PGA8 originate from~\cite{BL02a}.
Axioms PGA9--PGA30 are new and some of them did not come into the 
picture until we recently attempted to obtain a complete axiom system 
for behavioural congruence.

Henceforth, the instruction sequences of the kind considered in \PGAisc, 
\PGAsc, and \PGAbc\ are called \PGA\ instruction sequences.

\section{Basic Thread Algebra for Finite and Infinite Threads}
\label{sect-BTA}

In this section, we present an extension of \BTA\ (Basic Thread Algebra)
that reflects the idea that infinite threads are identical if their 
approximations up to any finite depth are identical.

\BTA\ is concerned with mathematical objects that model in a direct 
way the behaviours produced by \PGA\ instruction sequences under 
execution.
The objects in question are called threads.
A thread models a behaviour that consists of performing basic actions in 
a sequential fashion.
Upon performing a basic action, a reply from an execution environment
determines how the behaviour proceeds subsequently.
The basic instructions from $\BInstr$ are taken as basic actions.

\BTA\ has one sort: the sort $\Thr$ of \emph{threads}. 
We make this sort explicit to anticipate the need for many-sortedness
later on.
To build terms of sort $\Thr$, \BTA\ has the following constants and 
operators:
\begin{itemize}
\item
the \emph{inaction} constant $\const{\DeadEnd}{\Thr}$;
\item
the \emph{termination} constant $\const{\Stop}{\Thr}$;
\item
for each $a \in \BAct$, the binary \emph{postconditional composition} 
operator $\funct{\pcc{\ph}{a}{\nolinebreak\ph}}{\Thr \x \Thr}{\Thr}$.
\end{itemize}
Terms of sort $\Thr$ are built as usual in the one-sorted case. 
We assume that there are infinitely many variables of sort $\Thr$, 
including $x,y,z$.
We use infix notation for postconditional composition. 
We introduce \emph{basic action prefixing} as an abbreviation: 
$a \bapf t$, where $t$ is a \BTA\ term, abbreviates 
$\pcc{t}{a}{t}$.
We treat an expression of the form $a \bapf t$ and the \BTA\ term that
it abbreviates as syntactically the same.

Different closed \BTA\ terms are considered to represent different 
threads.
The thread represented by a closed term of the form $\pcc{t}{a}{t'}$
models the behaviour that will first perform $a$, and then proceed as 
the behaviour modeled by the thread represented by $t$ if the reply from 
the execution environment is $\True$ and proceed as the behaviour 
modeled by the thread represented by $t'$ if the reply from the 
execution environment is $\False$. 
The thread represented by $\Stop$ models the behaviour that will do no 
more than terminate and the thread represented by $\DeadEnd$ models the 
behaviour that will become inactive.

Closed \BTA\ terms are considered equal if they represent the same 
thread.
Equality of closed terms stands in \BTA\ for syntactic identity.
Therefore, \BTA\ has no axioms.

Each closed \BTA\ term represents a finite thread, i.e.\ a thread with 
a finite upper bound to the number of basic actions that it can perform.
Infinite threads, i.e.\ threads without a finite upper bound to the
number of basic actions that it can perform, can be defined by means of 
a set of recursion equations (see e.g.~\cite{BM09k}).
A regular thread is a finite or infinite thread that can only be in a 
finite number of states.
The behaviours produced by \PGA\ instruction sequences under execution 
are exactly the behaviours modeled by regular threads.

Two infinite threads are considered identical if their approximations up 
to any finite depth are identical.
The approximation up to depth $n$ of a thread models the behaviour that 
differs from the behaviour modeled by the thread in that it will become
inactive after it has performed $n$ actions unless it would terminate at
this point.
AIP (Approximation Induction Principle) is a conditional equation that
formalizes the above-mentioned view on infinite threads.
In AIP, the approximation up to depth $n$ is phrased in terms of the
unary \emph{projection} operator $\funct{\proj{n}}{\Thr}{\Thr}$.

The axioms for the projection operators and AIP are given in
Table~\ref{axioms-BTAinf}.
\begin{table}[!t]
\caption{Axioms of \BTAinf}
\label{axioms-BTAinf}
\begin{eqntbl}
\begin{axcol}
\proj{0}(x) = \DeadEnd                                  & \axiom{PR1} \\
\proj{n+1}(\DeadEnd) = \DeadEnd                         & \axiom{PR2} \\
\proj{n+1}(\Stop) = \Stop                               & \axiom{PR3} \\
\proj{n+1}(\pcc{x}{a}{y}) = \pcc{\proj{n}(x)}{a}{\proj{n}(y)}
                                                        & \axiom{PR4}
\eqnsep
\LAND{n \geq 0} \proj{n}(x) = \proj{n}(y) \Limpl x = y  & \axiom{AIP}
\end{axcol}
\end{eqntbl}
\end{table}
In this table, $a$ stands for an arbitrary basic action from $\BAct$ 
and $n$ stands for an arbitrary natural number from $\Nat$.
We write \BTAinf\ for \BTA\ extended with the projection operators, 
the axioms for the projection operators, and~AIP.

\section{Thread Extraction and Behavioural Congruence}
\label{sect-TE-BC}

In this section, we make precise in the setting of \BTAinf\ which 
behaviours are produced by \PGA\ instruction sequences under 
execution and introduce the notion of behavioural congruence on \PGA\ 
instruction sequences.

To make precise which behaviours are produced by \PGA\ instruction 
sequences under execution, we introduce an operator $\extr{\ph}$ meant 
for extracting from each \PGA\ instruction sequence the thread that 
models the behaviour produced by it under execution.
For each closed \PGAbc\ term $t$, $\extr{t}$ represents the thread that
models the behaviour produced by the instruction sequence represented 
by $t$ under execution.

Formally, we combine \PGAbc\ with \BTAinf\ and extend the 
combination with the \emph{thread extraction} operator 
$\funct{\extr{\ph}}{\InSeq}{\Thr}$ and 
the axioms given in Table~\ref{axioms-thread-extr}.%
\begin{table}[!t]
\caption{Axioms for the thread extraction operator}
\label{axioms-thread-extr}
\begin{eqntbl}
\begin{axcol}
\extr{a} = a \bapf \DeadEnd                            & \axiom{TE1}  \\
\extr{a \conc X} = a \bapf \extr{X}                    & \axiom{TE2}  \\
\extr{\ptst{a}} = a \bapf \DeadEnd                     & \axiom{TE3}  \\
\extr{\ptst{a} \conc X} = \pcc{\extr{X}}{a}{\extr{\fjmp{2} \conc X}}
                                                       & \axiom{TE4}  \\
\extr{\ntst{a}} = a \bapf \DeadEnd                     & \axiom{TE5}  \\
\extr{\ntst{a} \conc X} = \pcc{\extr{\fjmp{2} \conc X}}{a}{\extr{X}}
                                                       & \axiom{TE6}
\end{axcol}
\qquad
\begin{axcol}
\extr{\fjmp{l}} = \DeadEnd                             & \axiom{TE7}  \\
\extr{\fjmp{0} \conc X} = \DeadEnd                     & \axiom{TE8}  \\
\extr{\fjmp{1} \conc X} = \extr{X}                     & \axiom{TE9}  \\
\extr{\fjmp{l+2} \conc u} = \DeadEnd                   & \axiom{TE10} \\
\extr{\fjmp{l+2} \conc u \conc X} = \extr{\fjmp{l+1} \conc X}
                                                       & \axiom{TE11} \\
\extr{\halt} = \Stop                                   & \axiom{TE12} \\
\extr{\halt \conc X} = \Stop                           & \axiom{TE13}
\end{axcol}
\end{eqntbl}
\end{table}
In this table, 
$a$ stands for an arbitrary basic instruction from $\BInstr$, 
$u$ stands for an arbitrary primitive instruction from $\PInstr$, and 
$l$ stands for an arbitrary natural number from $\Nat$.

If a closed \PGAbc\ term $t$ represents an instruction sequence that
starts with an infinite chain of forward jumps, then TE9 and TE11 can 
be applied to $\extr{t}$ infinitely often without ever showing that a 
basic action is performed.
In this case, we have to do with inaction and, being consistent with 
that, $t = \fjmp{0} \conc t'$ is derivable from the axioms of \PGAsc\ 
for some closed \PGAbc\ term $t'$.
By contrast, $t = \fjmp{0} \conc t'$ is not derivable from the axioms of 
\PGAisc.
If closed \PGAbc\ terms $t$ and $t'$ represent instruction sequences in 
which no infinite chains of forward jumps occur, then $t = t'$ is 
derivable from the axioms of \PGAsc\ only if $\extr{t} = \extr{t'}$ is 
derivable from the axioms of \PGAisc\ and TE1--TE13.

If a closed \PGAbc\ term $t$ represents an infinite instruction 
sequence, then we can extract the approximations of the thread modeling
the behaviour produced by that instruction sequence under execution up 
to every finite depth: for each $n \in \Nat$, there exists a closed 
\BTA\ term $t''$ such that $\proj{n}(\extr{t}) = t''$ is derivable 
from the axioms of \PGAsc, TE1--TE13, the axioms of \BTA, and 
PR1--PR4.
If closed \PGAbc\ terms $t$ and $t'$ represent infinite instruction 
sequences that produce the same behaviour under execution, then this can
be proved using the following instance of AIP:
$\LAND{n \geq 0} \proj{n}(\extr{t}) = \proj{n}(\extr{t'}) \Limpl
 \extr{t} = \extr{t'}$.

\PGA\ instruction sequences are behaviourally equivalent if they 
produce the same behaviour under execution.
Behavioural equivalence is not a congruence.
Instruction sequences are behaviourally congruent if they produce the
same behaviour irrespective of the way they are entered and the way
they are left.

Let $t$ and $t'$ be closed \PGAbc\ terms.
Then:
\begin{itemize}
\item
$t$ and $t'$ \hsp{.35}are \emph{behaviourally equivalent}, 
\hsp{.175}written $t \beqv t'$, if $\extr{t} = \extr{t'}$ is derivable
from the axioms of \PGAsc, TE1--TE13, and the axioms of \BTAinf.
\item
$t$ and $t'$ are \emph{behaviourally congruent}, written 
$t \bcong t'$, if, for each $l,n \in \Nat$,
$\fjmp{l} \conc t \conc \halt^n \beqv \fjmp{l} \conc t' \conc \halt^n$.%
\footnote
{We use the convention that $t \conc {t'}^0$ stands for $t$.}
\end{itemize}
Behavioural congruence is the largest congruence contained in
behavioural equivalence.
Moreover, structural congruence implies behavioural congruence.
\begin{proposition}
\label{prop-scongr-bequiv}
For all closed \PGAbc\ terms $t$ and $t'$,
$t = t'$ is derivable from the axioms of \PGAsc\ only if $t \bcong t'$.
\end{proposition}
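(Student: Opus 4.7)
The plan is to exploit the fact that equational derivability is closed under contexts, so that any equation derivable from PGA1--PGA8 is preserved when the two sides are plugged into the specific context used to define behavioural congruence.

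First, I would observe that the axiom system \PGAsc\ consists of equations in the signature of \PGAbc, and that equational logic is a congruence: if $t = t'$ is derivable from \PGAsc, then for every \PGAbc\ context $C[\ph]$, the equation $C[t] = C[t']$ is also derivable from \PGAsc. Applying this to the context $\fjmp{l} \conc \ph \conc \halt^n$, where $l,n \in \Nat$ are arbitrary and $\halt^n$ is defined by induction as in the paper, I conclude that $\fjmp{l} \conc t \conc \halt^n = \fjmp{l} \conc t' \conc \halt^n$ is derivable from the axioms of \PGAsc\ for each choice of $l$ and $n$.

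Next, since the thread extraction operator $\extr{\ph}$ is part of the combined signature on which behavioural equivalence is defined, the congruence property of equality gives derivability of $\extr{\fjmp{l} \conc t \conc \halt^n} = \extr{\fjmp{l} \conc t' \conc \halt^n}$ from the axioms of \PGAsc\ alone, and hence a fortiori from the axioms of \PGAsc\ together with TE1--TE13 and the axioms of \BTAinf. By the definition of behavioural equivalence, this means $\fjmp{l} \conc t \conc \halt^n \beqv \fjmp{l} \conc t' \conc \halt^n$.

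Since $l$ and $n$ were arbitrary, the definition of behavioural congruence is satisfied, so $t \bcong t'$. There is no real obstacle here: the proof is essentially a one-line appeal to the fact that equality is a congruence and that \PGAsc\ is a subtheory of the theory in which behavioural equivalence is defined. The only minor point worth spelling out is that the context $\fjmp{l} \conc \ph \conc \halt^n$ is a legitimate \PGAbc\ term context for every $l,n \in \Nat$, which follows from a trivial induction on $n$ using the convention $t \conc {t'}^0 = t$.
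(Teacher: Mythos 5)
Your argument is correct with respect to the definitions as they are literally stated in the paper, but it takes a genuinely different route from the paper's own proof. The paper proves the proposition by the method of Proposition~2.2 of the cited earlier work: the threads extracted from the two sides are exhibited as solutions of one and the same finite set of recursion equations whose right-hand sides are of the form $\DeadEnd$, $\Stop$ or $\pcc{s}{a}{s'}$, and the uniqueness of such solutions, obtained from AIP, yields their equality. You instead exploit the fact that the paper defines $t \beqv t'$ as derivability of $\extr{t} = \extr{t'}$ from a theory that already contains the axioms of \PGAsc: the replacement/congruence rules of equational logic then carry $t = t'$ first to $\fjmp{l} \conc t \conc \halt^n = \fjmp{l} \conc t' \conc \halt^n$ and then to the corresponding equation between thread extractions, without ever computing a thread. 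This is a legitimate and much shorter proof of the formal statement, and what it buys is exactly that brevity. What it gives up is robustness and content: it succeeds only because \PGAsc\ happens to be listed among the axioms in the definition of $\beqv$, and it would establish nothing under the semantic reading that the surrounding text clearly intends (identity of the behaviours actually produced under execution, i.e.\ of the extracted threads in a model of \BTAinf), nor if behavioural equivalence were defined via TE1--TE13 and \BTAinf\ alone. The recursion-equations-plus-AIP argument is the one that carries the real mathematical weight of the claim that structural congruence implies behavioural congruence, so you should at least remark that your shortcut is specific to the proof-theoretic formulation of $\beqv$.
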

\begin{proof}
The proof is analogous to the proof of Proposition~2.2 
from~\cite{BM12b}.
In that proof use is made of the uniqueness of solutions of sets of 
recursion equations where each right-hand side is a \BTA\ term of
the form $\DeadEnd$, $\Stop$ or $\pcc{s}{a}{s'}$ with \BTA\ terms $s$ 
and $s'$ that contain only variables occurring as one of the right-hand 
sides.
This uniqueness follows from AIP (see also Corollary~2.1 
from~\cite{BM12b}). 
\qed 
\end{proof} 
Conversely, behavioural congruence does not implies structural
congruence.
For example,
$\ptst{a} \conc \halt \conc \halt \bcong
 \ntst{a} \conc \halt \conc \halt$,
but
$\ptst{a} \conc \halt \conc \halt =
 \ntst{a} \conc \halt \conc \halt$ 
is not derivable from the axioms of \PGAsc.

\section{Axioms of \PGAbc\ and Behavioural Congruence}
\label{sect-PGA-BC}

The axioms of \PGAbc\ are intended to be used for establishing
behavioural congruence in a direct way by nothing more than equational 
reasoning.
Two questions arise: the soundness question, i.e.\ the question whether 
derivable equality implies behavioural congruence, and the completeness 
question, i.e.\ the question whether behavioural congruence implies 
derivable equality.
The two theorems presented in this section concern these questions.
The first theorem fully answers the soundness question in the 
affirmative.
The second theorem answers the completeness question in the affirmative 
only for the restriction obtained by excluding the repetition operator
because of problems in mastering the intricacy of a completeness proof 
for the unrestricted case.   

We start with a few additional definitions and results which will be 
used in the proof of the theorems.

A closed \PGAbc\ term $t$ \emph{has simplifiable control flow} if there 
exists a closed \PGAbc\ term $t'$ such that $t = t'$ is derivable from 
the axioms of \PGAisc\ and $t'$ contains a subterm of the same form as 
the left-hand side of one of the axioms PGA9--PGA30. 
The intuition is that a closed \PGAbc\ term has simplifiable control 
flow if the instruction sequence that it represents has unnecessary 
tests, unnecessary jumps or needlessly long jumps.
A closed \PGAbc\ term is in \emph{third canonical form} if it is in 
second canonical form and does not have simplifiable control flow.

The following proposition relates \PGAbc\ and third canonical forms.
\begin{proposition}
\label{prop-3CF}
For all closed \PGAbc\ terms $t$, there exists a closed \PGAbc\ term 
$t'$ that is in third canonical form such that $t = t'$ is derivable 
from the axioms of \PGAbc.
\end{proposition}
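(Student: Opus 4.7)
The plan is to proceed in two stages. First, by Proposition~\ref{prop-2CF}, I reduce $t$ to a closed \PGAbc\ term $t_0$ in second canonical form with $t = t_0$ derivable from the axioms of \PGAsc, all of which are axioms of \PGAbc. Second, while the current term still has simplifiable control flow, I pick a witness (equivalent under \PGAisc) that exposes a subterm of the same shape as the left-hand side of one of PGA9--PGA30, apply that axiom as a left-to-right rewrite, and then invoke Proposition~\ref{prop-2CF} again to return to second canonical form. Any fixed point of this process is, by definition, in third canonical form.

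For termination I would introduce a well-founded measure on closed \PGAbc\ terms evaluated on their second-canonical-form representation. A workable candidate is the lexicographic tuple $\mu(t) = (N_J(t),\, N_T(t),\, S_J(t),\, L(t))$, where $N_J(t)$ counts the forward-jump instructions, $N_T(t)$ counts the test instructions, $S_J(t)$ is the sum of the arguments of all forward-jump instructions, and $L(t)$ is the total number of primitive instructions, each summed over the finite prefix and, when present, the body of the repetition. A case analysis then shows that each of PGA9--PGA30 strictly decreases $\mu$: PGA9--PGA18, PGA27 and PGA28 replace a test by the corresponding plain basic instruction, decreasing $N_T$ with $N_J$ unchanged; PGA19--PGA21, PGA25, PGA26 and PGA29 eliminate one forward-jump instruction, decreasing $N_J$; PGA22--PGA24 only shorten one jump target, decreasing $S_J$ with $N_J$ and $N_T$ unchanged; and PGA30 either removes a forward-jump instruction or collapses a long repetition block to $a\rep$, decreasing $N_J$ or $L$.

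The main obstacle is verifying that the re-normalization via Proposition~\ref{prop-2CF} performed after each rewrite never increases $\mu$, which requires inspecting PGA1--PGA8 on the canonical-form representation: PGA1, PGA2 and PGA4 only reassociate or unfold and preserve all four counts on second canonical forms; PGA3 discards a suffix of a repetition-bearing term and cannot increase any component; and PGA5--PGA8 merge chained jumps or shorten a jump ending in the repeating part, which never increases $S_J$. A further subtlety is that several rewrites, notably PGA19 and PGA20 (which introduce an extra test instruction) and PGA22--PGA24 (whose effect only surfaces in the third component of $\mu$), force the four components to be ordered exactly as $(N_J, N_T, S_J, L)$; pinning down this ordering is the main design choice in the termination argument, and the remainder is a finite, if tedious, case check.
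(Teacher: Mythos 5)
Your overall strategy is the paper's: pass to a second canonical form via Proposition~\ref{prop-2CF}, then repeatedly apply PGA9--PGA30 left to right, interleaved with instruction-sequence-congruence rewrites that expose the redexes, until no simplifiable control flow remains. The paper's own proof stops at asserting that this process succeeds, so your termination measure is the only substantive addition --- and that is exactly where the argument has a gap.

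The gap is the claim that the re-normalization steps ``preserve all four counts on second canonical forms''. A redex of PGA9--PGA30 may straddle the seam of a repetition, and by the definition of \emph{has simplifiable control flow} it then has to be exposed by rewriting to a \PGAisc-equal term, which forces an unfolding via PGA4 read left to right; this duplicates instructions and can strictly increase every component of $\mu$, including the dominant one. Concretely, $b \conc (\fjmp{1} \conc b \conc \ptst{a})\rep$ has $\mu = (1,1,1,4)$; to apply PGA11 one first rewrites it by PGA4 to $b \conc \fjmp{1} \conc b \conc (\ptst{a} \conc \fjmp{1} \conc b)\rep$, and the result of the rewrite, $b \conc \fjmp{1} \conc b \conc (a \conc \fjmp{1} \conc b)\rep$, is itself already in second canonical form yet has $\mu = (2,0,2,7)$, an increase in $N_J$. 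The measure only drops below its original value after refolding with PGA4 read right to left, so $\mu$ is not well defined on \PGAisc-classes (second canonical forms are not unique), and your per-iteration decrease is not established. A repair is to take $\mu(t)$ to be the minimum of the tuple over all second canonical forms \PGAisc-equal to $t$ and to prove that each application of PGA9--PGA30 strictly decreases that minimum; this needs an argument you have not supplied. (Two smaller points: PGA27 and PGA28 also remove a jump, so they belong with the $N_J$-decreasing group --- harmless; and the instance of PGA30 with $k=0$ and $u_1 \equiv a$ is an identity that decreases nothing, so only proper instances can be meant, a wrinkle already present in the paper's definition.) For closed repetition-free terms, where only PGA1 is needed to expose redexes and all four counts are invariant under reassociation, your termination argument is sound as it stands, and that is the case the completeness proof actually relies on.
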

\begin{proof}
By Proposition~\ref{prop-2CF}, there exists a closed 
\PGAbc\ term $t''$ that is in second canonical form such that $t = t''$ 
is derivable from the axioms of \PGAsc.
If $t''$ has simplifiable control flow, it can be transformed into a 
closed \PGAbc\ term that does not have simplifiable control flow by 
applications of PGA9--PGA30 possibly alternated with applications of 
PGA1 and/or PGA4. 
\qed 
\end{proof} 
Proposition~\ref{prop-3CF} is important to the proof of
Theorem~\ref{theorem-completeness} below.
Actually, there are some axioms among PGA9--PGA30 that did not turn up 
until the elaboration of the proof of Theorem~\ref{theorem-completeness}.

The set of \emph{basic \PGAbc\ terms} is inductively defined as follows:
\begin{itemize}
\item
if $u \in \PInstr$, then $u$ is a basic \PGAbc\ term;
\item
if $u \in \PInstr$ and $t$ is a basic \PGAbc\ term, then 
$u \conc t$ is a basic \PGAbc\ term; and
\item
if $t$ is a basic \PGAbc\ term, then $t\rep$ is a basic \PGAbc\ term.
\end{itemize}
Obviously, for all closed \PGAbc\ terms $t$, there exists a basic 
\PGAbc\ term $t'$ such that $t = t'$ is derivable from PGA1.

\begin{lemma}
\label{lemma-basic-3CF}
For all basic repetition-free \PGAbc\ terms $t$ that are in third 
canonical form,  $t$ is of one of the following forms:
\begin{enumerate}
\item[(a)]
$u$, \phantom{${}\! \conc t'$} where $u \in \PInstr$;
\item[(b)]
$u \conc t'$, where $u \in \PInstr$ and $t'$ is a basic repetition-free
\PGAbc\ term that is in third canonical form.
\end{enumerate}
\end{lemma}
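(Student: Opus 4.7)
The plan is to proceed directly from the inductive definition of basic \PGAbc\ terms. Since $t$ is repetition-free, the third clause of that definition (forming $s\rep$) has not been used in its construction, so $t$ arises either from the first clause, yielding form~(a), or from the second clause, yielding $t = u \conc t'$ with $u \in \PInstr$ and $t'$ a basic \PGAbc\ term whose construction also avoids the repetition operator; hence $t'$ is itself basic and repetition-free. The only substantive thing left to verify is that in this second case $t'$ is again in third canonical form.

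To see this, I would check each clause of the definition of third canonical form for $t'$. First, $t'$ is in first canonical form, since any closed repetition-free \PGAbc\ term is. Second, the requirement on shortest possible jumps ending in the repeating part is vacuous, since $t'$ has no repeating part. Third, $t'$ has no chained jumps: otherwise there would be a closed \PGAbc\ term $s$ with $t' = s$ derivable in \PGAisc\ and $s$ containing a subterm $\fjmp{n{+}1} \conc u_1 \conc \ldots \conc u_n \conc \fjmp{l}$, but then by congruence $u \conc t' = u \conc s$ is derivable in \PGAisc\ as well, and the offending subterm survives in $u \conc s$, contradicting the fact that $t = u \conc t'$ has no chained jumps. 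Fourth, exactly the same congruence argument, with the left-hand-side patterns of PGA9--PGA30 in place of the chained-jump pattern, rules out simplifiable control flow in $t'$.

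The only delicate point is the appeal to \PGAisc-congruence in the last two clauses: the properties \emph{has chained jumps} and \emph{has simplifiable control flow} are defined modulo derivable equality in \PGAisc\ rather than as purely syntactic properties, so one must observe that prepending $u$ commutes with \PGAisc-derivable equality. This is immediate from the fact that $\conc$ is a symbol of the signature and equational derivation respects congruence, so no real obstacle arises.
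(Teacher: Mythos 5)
Your proof is correct and takes essentially the same route as the paper's: read off the two possible forms from the inductive definition of basic terms, then check that each defining clause of third canonical form carries over from $t$ to $t'$. The paper asserts this carrying-over as immediate from the definitions, whereas you supply the (correct) contrapositive congruence argument that a witness for chained jumps or simplifiable control flow in $t'$ would yield one in $u \conc t'$.
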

\begin{proof}
This lemma with all occurrences of ``third canonical form'' replaced by 
``first canonical form'' follows immediately from the definitions of 
basic \PGAbc\ term and first canonical form.
Moreover, in the case that $t$ is of the form~(b), it follows 
immediately from the definitions concerned that the properties ``does 
not have chained jumps'',``has shortest possible jumps ending in the 
repeating part'', and ``does not have simplifiable control flow'' carry 
over from $t$ to $t'$.
This means that $t'$ is also in third canonical form. 
\qed 
\end{proof}
In the rest of this section, we refer to the possible forms of basic 
\PGAbc\ terms that are in third canonical form as in 
Lemma~\ref{lemma-basic-3CF}.
\begin{lemma}
\label{lemma-bcong-3CF}
For all basic repetition-free \PGAbc\ terms $t$ and $t'$ that are in 
third canonical form, $t \bcong t'$ only if
\begin{enumerate} 
\item[(1)]
$t$ is of the form~(a) iff $t'$ is of the form~(a);
\item[(2)]
$t$ is of the form~(b) iff $t'$ is of the form~(b).
\end{enumerate}
\end{lemma}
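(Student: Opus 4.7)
I plan to prove the contrapositive. By Lemma~\ref{lemma-basic-3CF}, every basic repetition-free term in third canonical form has form (a) or (b), and these are mutually exclusive (a single primitive instruction vs.\ a concatenation of at least two), so conditions (1) and (2) are logically equivalent and it suffices to establish one. Without loss of generality, then, assume $t$ has form (a) and $t'$ has form (b), so $t = u$ for some $u \in \PInstr$ and $t' = v_1 \conc \cdots \conc v_k$ for some $k \geq 2$ and $v_1, \ldots, v_k \in \PInstr$; the goal is to show $t \not\bcong t'$.

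The witnessing instance of the definition of $\bcong$ will be $l = k{+}1$, $n = 1$. On the $t$-side, $\extr{\fjmp{k{+}1} \conc u \conc \halt}$ reduces by one application of TE11 (legal because $k{+}1 \geq 2$) to $\extr{\fjmp{k} \conc \halt}$, and then, since $k \geq 2$, to $\DeadEnd$ by TE10. On the $t'$-side, $\extr{\fjmp{k{+}1} \conc v_1 \conc \cdots \conc v_k \conc \halt}$ reduces by $k$ successive applications of TE11 to $\extr{\fjmp{1} \conc \halt}$; the side-condition of TE11 is satisfied at each step because the jump counter decreases from $k{+}1$ down to $2$, and only reaches $1$ after the last TE11 rewrite. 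Two further rewrites, by TE9 and then TE12, yield $\Stop$. As $\DeadEnd$ and $\Stop$ are provably distinct in \BTAinf, the instances $\fjmp{k{+}1} \conc t \conc \halt$ and $\fjmp{k{+}1} \conc t' \conc \halt$ fail to be behaviourally equivalent, and hence $t \not\bcong t'$.

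There is no real difficulty here; the proof reduces to one explicit thread-extraction computation and a check of side-conditions. The key design choice is the pair $(l,n)$: a jump strictly longer than the program length produces inaction, while a jump that lands exactly on the trailing $\halt$ produces termination. Note that the reduction on the $t'$-side is uniform in $v_1, \ldots, v_k$---the jump skips straight over them---which is precisely why a single choice of $(l,n)$, depending only on the length $k$, works against every possible form-(b) term $t'$.
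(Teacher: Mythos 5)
Your proof is correct. The distinguishing computation checks out: for $t \equiv u$ and $t' \equiv v_1 \conc \ldots \conc v_k$ with $k \geq 2$, the instance $l = k{+}1$, $n = 1$ gives $\extr{\fjmp{k{+}1} \conc u \conc \halt} = \DeadEnd$ (one application of TE11, then TE10, since $k{+}1 \geq 3$ overshoots the two remaining instructions) and $\extr{\fjmp{k{+}1} \conc v_1 \conc \ldots \conc v_k \conc \halt} = \Stop$ ($k$ applications of TE11, then TE9 and TE12), and since $\DeadEnd = \Stop$ is not derivable (the two sides already differ under $\proj{1}$ in the intended model of threads), $t \not\bcong t'$ follows. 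The paper organizes the argument differently: it dismisses property~(1) as trivial on the strength of the stronger claim that, for $t$ of form~(a), $t \bcong t'$ holds iff $t \equiv t'$, and then obtains property~(2) from property~(1) together with the exhaustiveness of Lemma~\ref{lemma-basic-3CF}. You instead observe that exhaustiveness plus mutual exclusivity make (1) and (2) logically equivalent, reducing the whole lemma to separating a form-(a) term from a form-(b) term, which your single context handles uniformly in $t'$. What your route buys is that you never need to justify the stronger ``congruent single instructions are syntactically identical'' claim, which is true but requires its own small case analysis (separating, e.g., $a$ from $\ptst{a}$, or $\fjmp{l}$ from $\fjmp{l'}$, needs further contexts); what the paper's route buys is that this stronger claim is needed anyway as the base case of the induction in the proof of Theorem~\ref{theorem-completeness}, so asserting it once serves both purposes. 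Either way the lemma stands; your version supplies explicitly the witness that the paper leaves unstated.
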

\begin{proof}
Suppose that $t$ and $t'$ are in third canonical form and $t \bcong t'$.

Property~(1) is trivial because, in the case that $t$ is of the 
form~(a), $t \bcong t'$ iff $t \equiv t'$.%
\footnote{We write $\equiv$ for syntactic identity.}

Property~(2) follows immediately from Lemma~\ref{lemma-basic-3CF} and 
the consequence of property~(1) that, in the case that $t$ is of the 
form~(b), $t'$ is not of the form~(a).
\qed 
\end{proof}

We now move on to the two theorems announced at the beginning of this 
section.

\begin{theorem}
\label{theorem-soundness}
For all closed \PGAbc\ terms $t$ and $t'$, $t = t'$ is derivable from the 
axioms of \PGAbc\ only if $t \bcong t'$.
\end{theorem}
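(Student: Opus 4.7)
The plan is to prove the theorem by induction on the length of a derivation of $t = t'$ from the axioms of \PGAbc. The reflexivity, symmetry, and transitivity rules of equational logic are immediately sound for $\bcong$, since $\beqv$ is an equivalence. Two substantial obligations remain: (A) that each axiom of \PGAbc\ is sound for $\bcong$, and (B) that $\bcong$ is closed under the congruence rule of equational logic --- i.e.\ that $\bcong$ is itself a congruence on closed \PGAbc\ terms with respect to $\conc$ and $\rep$.

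For (A), axioms PGA1--PGA8 are already covered by Proposition~\ref{prop-scongr-bequiv}. For each of PGA9--PGA30 I would verify directly that, for every $l,n \in \Nat$, $\extr{\fjmp{l} \conc \mathrm{LHS} \conc \halt^n} = \extr{\fjmp{l} \conc \mathrm{RHS} \conc \halt^n}$ using the thread-extraction axioms TE1--TE13 (and AIP, applied to projections $\proj{k}$, in the axioms that contain $\rep$). The computation follows a uniform pattern: the two sides of each axiom differ only at a single designated primitive position, so offsets $l$ that steer execution past that position yield trivially identical extractions, and offsets that steer into it yield identical extractions because either both branches of a test lead to semantically equal continuations (PGA9, PGA10, PGA15--PGA18), or one branch rejoins the other (PGA11--PGA14), or a chain of forward jumps is retargeted onto the same ultimate landing primitive (PGA19--PGA30).

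For (B), I would first reduce any flanking closed term to first canonical form via Proposition~\ref{prop-1CF}. When it has the shape $s_1 \conc s_2\rep$, PGA3 collapses $s \conc r = s$ already in \PGAisc, and Proposition~\ref{prop-scongr-bequiv} applies immediately. For a repetition-free flank I would proceed by induction on the number of primitive instructions, using TE2, TE4, TE6, TE9, TE11, TE13 to peel off the leading primitive on both sides and reduce to an instance of the defining $\bcong$-condition on the remainder with suitably shifted $l$ and $n$; as a companion observation I would use that behaviourally congruent closed repetition-free terms represent instruction sequences of equal length, which keeps the position arithmetic aligned. Congruence with respect to $\rep$ follows by unfolding $t\rep = t \conc t\rep$ inside each projection $\proj{k}(\extr{\,\fjmp{l} \conc t\rep \conc \halt^n\,})$ a finite number of times, and invoking AIP.

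The main obstacle is congruence under right-concatenation: if $t_1 \bcong t_1'$, the hypothesis gives coincidence of extractions only in contexts $\fjmp{l} \conc [\cdot] \conc \halt^{n'}$, whereas to conclude $t_1 \conc t_2 \bcong t_1' \conc t_2$ the relevant suffix $t_2 \conc \halt^n$ may contain non-trivial instructions. The point that makes this go through is that varying $n$ in the pure-halt contexts already pins down each reply-dependent exit position of $t_1$ past its own end, through the transitions from $\Stop$ to $\DeadEnd$ as $n$ shrinks; consequently the ``exit profile'' of $t_1$ is determined by $\bcong$-data and coincides with that of $t_1'$, so any common suffix --- including $t_2 \conc \halt^n$ --- can be plugged in on both sides to yield identical extracted threads. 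Task (A) is long in the raw number of axioms but routine once the thread-extraction patterns are settled.
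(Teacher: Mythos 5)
Your proposal is correct and follows essentially the same route as the paper: reduce soundness to verifying each axiom's closed substitution instances, handling PGA1--PGA8 via Proposition~\ref{prop-scongr-bequiv} and PGA9--PGA30 by direct computation with the definition of $\bcong$ and TE1--TE13 (with unfolding, and AIP where repetition is involved). The only divergence is that the paper simply invokes the fact, asserted without proof in Section~\ref{sect-TE-BC}, that $\bcong$ is a congruence, whereas your part~(B) sketches an argument for it; that extra material (the ``exit profile'' discussion for right-concatenation) is not needed to match the paper's proof, though it does address the one step the paper takes for granted.
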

\begin{proof}
Because $\bcong$ is a congruence, it is sufficient to prove for
each axiom of \PGAbc\ that, for all its closed substitution instances 
$t = t'$, \mbox{$t \bcong t'$}.
For PGA1--PGA8, this follows immediately from 
Proposition~\ref{prop-scongr-bequiv}.
For PGA9--PGA30, it follows very straightforwardly from the definition 
of $\bcong$, \mbox{TE1--TE13}, and in the case of PGA17 and PGA18, the 
unfolding equation $X\rep = X \conc X\rep$. 
\qed 
\end{proof}

\begin{theorem}
\label{theorem-completeness}
For all closed repetition-free \PGAbc\ terms $t$ and $t'$, $t = t'$ is 
derivable from the axioms of \PGAbc\ if $t \bcong t'$.
\end{theorem}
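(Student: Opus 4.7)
The plan is to reduce to third canonical form and then argue that within third canonical form behavioural congruence of repetition-free terms forces syntactic identity. Given $t \bcong t'$, apply Proposition~\ref{prop-3CF} to obtain closed \PGAbc\ terms $\hat t$, $\hat t'$ in third canonical form with $t = \hat t$ and $t' = \hat t'$ derivable from the axioms of \PGAbc; since none of PGA1--PGA30 introduces a fresh occurrence of $\rep$, repetition-freeness is preserved, and after further applications of PGA1 we may assume $\hat t$ and $\hat t'$ are basic repetition-free \PGAbc\ terms in third canonical form. By Theorem~\ref{theorem-soundness} we then have $\hat t \bcong \hat t'$, so it suffices to prove that two such terms are syntactically identical whenever they are behaviourally congruent; this immediately yields $t = t'$.

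I would proceed by induction on the total number of primitive-instruction occurrences in $\hat t$. In the base case $\hat t$ has the form~(a) of Lemma~\ref{lemma-basic-3CF}, i.e.\ $\hat t \equiv u$ for some $u \in \PInstr$; Lemma~\ref{lemma-bcong-3CF}(1) then forces $\hat t' \equiv v$ for some $v \in \PInstr$, and a finite check shows that for every pair $u \not\equiv v$ there is some $n \in \Nat$ for which $\extr{u \conc \halt^n}$ and $\extr{v \conc \halt^n}$ differ as closed \BTA\ terms (which, since \BTA\ has no axioms, means they are syntactically distinct). A leading $\halt$ is detected by a $\Stop$ at the head of the thread, a basic instruction by a postconditional with equal branches, and the two tests by asymmetric branches via TE4 and TE6; distinct forward jumps are separated by a suitable number of trailing halts via repeated use of TE10--TE11.

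In the inductive step Lemma~\ref{lemma-basic-3CF} places $\hat t$ in form~(b), say $\hat t \equiv u \conc s$ with $s$ again basic repetition-free in third canonical form, and Lemma~\ref{lemma-bcong-3CF}(2) together with part~(1) yields $\hat t' \equiv v \conc s'$ of the same shape. The argument splits in two. First, provided one can show $u \equiv v$, applying the $\bcong$ condition to the probe $\fjmp{l+2} \conc \hat t \conc \halt^n$ and reducing by TE11 gives $\extr{\fjmp{l+1} \conc s \conc \halt^n} = \extr{\fjmp{l+1} \conc s' \conc \halt^n}$ for all $l, n \in \Nat$; together with the trivial case of a leading $\fjmp{0}$, this is precisely $s \bcong s'$, the induction hypothesis supplies $s \equiv s'$, and hence $\hat t \equiv \hat t'$. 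Second, to derive $u \equiv v$ I would split cases on the syntactic class of $u$ and $v$ (basic instruction, positive test, negative test, forward jump, termination), exploiting in every case the fact that $\hat t$ and $\hat t'$ are in third canonical form so that no initial segment of $\hat t$ or $\hat t'$ matches the left-hand side of any of PGA9--PGA30, which blocks exactly the spurious coincidences in the extracted threads that those axioms were designed to identify.

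The principal obstacle is precisely this last case analysis. The five classes of primitive instructions generate a large number of pairs, and the configurations involving a leading jump $\fjmp{l}$ of non-trivial length (handled by PGA21--PGA29) require tracking which primitive instruction is the first to be actually executed after repeated TE10/TE11 peelings; this appears to need a secondary induction on the jump distance together with careful bookkeeping of where the jump lands inside $s$ or $s'$. It is here that the full strength of PGA9--PGA30 is used, in line with the authors' remark that several of those axioms did not emerge until the present completeness proof was being elaborated.
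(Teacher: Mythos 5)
Your overall strategy is the paper's: reduce to basic repetition-free terms in third canonical form via Proposition~\ref{prop-3CF}, induct on the size of the term, use Lemmas~\ref{lemma-basic-3CF} and~\ref{lemma-bcong-3CF} to match shapes, peel off the head instruction with the probe $\fjmp{l{+}2} \conc t \conc \halt^n$ and TE11 to obtain $s \bcong s'$, and then case-analyse the pair of head instructions. But your inductive step is ordered in a way that undermines it. You derive $s \bcong s'$ only ``provided one can show $u \equiv v$'', and then set out to establish $u \equiv v$ by a separate case analysis. In fact the TE11 peeling never uses $u \equiv v$: $\extr{\fjmp{l{+}2} \conc u \conc s \conc \halt^n} = \extr{\fjmp{l{+}1} \conc s \conc \halt^n}$ holds for \emph{every} primitive instruction $u$, so $s \bcong s'$, and by the induction hypothesis $s \equiv s'$, is available unconditionally and should be extracted first. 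The paper does exactly that and then runs the head case analysis with $s \equiv s'$ in hand; every contradiction it derives in the eight surviving cases $(u,u')$ is a comparison of $\extr{u \conc s \conc \halt^n}$ against $\extr{u' \conc s \conc \halt^n}$ over the \emph{same} tail. Your plan for deriving $u \equiv v$ instead speaks of bookkeeping ``where the jump lands inside $s$ or $s'$'', i.e.\ comparing heads over possibly distinct tails, which is a genuinely harder problem that the third-canonical-form hypothesis alone does not resolve. Reordering the two halves repairs this.

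The second, larger issue is that the head comparison you defer to a ``principal obstacle'' is the entire substance of the paper's proof. The slogan that no initial segment of $t$ or $t'$ matches a left-hand side of PGA9--PGA30 is the right intuition, but the paper must still, for each of the eight nontrivial pairs $(u,u')$ (after discarding the trivial, symmetric, and immediately contradictory ones among the 35), perform a further nested case distinction on the form of $s$ in which every branch ends either with an explicit distinguishing context $\fjmp{l} \conc t \conc \halt^n$ or with the observation that $t$ would then fail to be in third canonical form. It also leans on the auxiliary fact that for basic repetition-free $r$ in third canonical form, $\extr{r} \neq \extr{\fjmp{l{+}2} \conc r}$ and $\extr{r} \neq \extr{u_1 \conc \ldots \conc u_{k{+}1} \conc r}$ when $u_1$ is a plain or test instruction --- this is where repetition-freeness is essential, and you do not isolate it. Your base case (distinguishing any two distinct primitive instructions by trailing $\halt$s) is correct and matches what the paper treats as trivial.
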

\begin{proof}
See Appendix~\ref{app-proof-theorem-2}. 
\qed 
\end{proof}

We will conclude Appendix~\ref{app-proof-theorem-2} by going into the 
main problem that we have experienced in mastering the intricacy of a 
proof of the unrestricted version of Theorem~\ref{theorem-completeness},
which reads as follows:
\begin{quote}
\em
for all closed \PGAbc\ terms $t$ and $t'$, $t = t'$ is derivable from 
the axioms of \PGAbc\ if $t \bcong t'$.
\end{quote}

\section{The Case of Instructions for Boolean Registers}
\label{sect-PGAbr}

In this section, we present the instantiation of \PGAbc\ in which all 
possible instructions for Boolean registers are taken as basic 
instructions.
This instantiation is called \PGAbrbc\ (\PGAbc\ with instructions for 
Boolean registers).
In order to justify the additional axioms of \PGAbrbc, we also present 
the instantiation of \BTA\ in which all possible instructions for 
Boolean registers are taken as basic actions and adapt the definitions 
of behavioural equivalence and behavioural congruence to closed 
\PGAbrbc\ terms using this instantiation of \BTA.

In \PGAbrbc, it is assumed that a fixed but arbitrary set $\Foci$ of 
\emph{foci} has been given.
Foci serve as names of Boolean register services.

The set of basic instructions used in \PGAbrbc\ consists of the 
following:
\begin{itemize}
\item
for each $f \in \Foci$ and $\funct{p,q}{\Bool}{\Bool}$,
a \emph{basic Boolean register instruction} $f.\mbr{p}{q}$.
\end{itemize}
We write $\BInstrbr$ for this set.

The intuition is that the execution of a basic Boolean register 
instruction may modify the register content of a Boolean register 
service and must produce a Boolean value as reply at its completion.
The actual reply may be dependent on the register content of the Boolean 
register service.
More precisely, the execution of a basic Boolean register instruction 
has the following effects:
\begin{itemize}
\item
if the register content of the Boolean register service named $f$ is $b$ 
when the execution of $f.\mbr{p}{q}$ starts, then its register content 
is $q(b)$ when the execution of $f.\mbr{p}{q}$ terminates;
\item
if the register content of the Boolean register service named $f$ is $b$ 
when the execution of $f.\mbr{p}{q}$ starts, then the reply produced on 
termination of the execution of $f.\mbr{p}{q}$ is $p(b)$.
\end{itemize}
The execution of $f.\mbr{p}{q}$ has no effect on the register content of 
Boolean register services other than the one named $f$.

$\Bool \to \Bool$, the set of all unary Boolean functions, consists of 
the following four functions:
\begin{itemize}
\item
the function $\FFunc$, satisfying 
$\FFunc(\False) = \False$ and $\FFunc(\True) = \False$;
\item
the function $\TFunc$, satisfying 
$\TFunc(\False) = \True$ and $\TFunc(\True) = \True$;
\item
the function $\IFunc$, satisfying 
$\IFunc(\False) = \False$ and $\IFunc(\True) = \True$;
\item
the function $\CFunc$, satisfying 
$\CFunc(\False) = \True$ and $\CFunc(\True) = \False$.
\end{itemize}
In~\cite{BM13a}, we actually used the methods $\mbr{\FFunc}{\FFunc}$, 
$\mbr{\TFunc}{\TFunc}$, and $\mbr{\IFunc}{\IFunc}$, but denoted them by 
$\setbr{0}$, $\setbr{1}$ and $\getbr$, respectively.
In~\cite{BM14e}, we actually used, in addition to these methods, the 
method $\mbr{\CFunc}{\CFunc}$, but denoted it by $\negbr$.

We write $\PInstrbr$ for the set $\PInstr$ of primitive instructions in 
the case where $\BInstrbr$ is taken as the set $\BInstr$.

The constants and operators of \PGAbrbc\ are the constants and operators 
of \PGAbc\ in the case where $\PInstrbr$ is taken as the set $\PInstr$.

Closed \PGAbrbc\ terms are considered equal if the instruction
sequences that they represent can always take each other's place in an
instruction sequence in the sense that the behaviour produced under 
execution remains the same under the intended interpretation of the 
instructions from $\BInstrbr$.
In other words, equality of closed terms stands in \PGAbrbc\ for a kind 
of behavioural congruence of the represented instruction sequences.
The kind of behavioural congruence in question will be made precise at 
the end of this section.

The axioms of \PGAbrbc\ are the axioms of \PGAbc\ and in addition the 
axioms given in Table~\ref{axioms-PGAbr}.%
\begin{table}[!t]
\caption{Additional axioms for \PGAbrbc}
\label{axioms-PGAbr}
\begin{eqntbl}
\begin{axcol}
\ptst{f.\mbr{\FFunc}{p}} = \ntst{f.\mbr{\TFunc}{p}}  & \axiom{PGAbr1} \\
\ptst{f.\mbr{\TFunc}{p}} = \ntst{f.\mbr{\FFunc}{p}}  & \axiom{PGAbr2} \\
\ptst{f.\mbr{\IFunc}{p}} = \ntst{f.\mbr{\CFunc}{p}}  & \axiom{PGAbr3} \\
\ptst{f.\mbr{\CFunc}{p}} = \ntst{f.\mbr{\IFunc}{p}}  & \axiom{PGAbr4} \\
\ptst{f.\mbr{\TFunc}{p}} = f.\mbr{q}{p}              & \axiom{PGAbr5} 
\end{axcol}
\end{eqntbl}
\end{table}
In this table, 
$f$ stands for an arbitrary focus from $\Foci$, and
$p$ and $q$ stand for arbitrary unary Boolean functions from 
$\Bool \to \Bool$.

If $t = t'$ is derivable from the axioms of \PGAbrbc, then $t$ and $t'$ 
represent instruction sequences that can always take each other's place
in an instruction sequence without affecting the behaviour produced 
under execution in an essential way, taking the intended interpretation 
of the instructions from $\BInstrbr$ into account.
Below, we introduce the instantiation of \BTA\ in which all possible 
instructions for Boolean registers are taken as basic actions to make 
this precise.

Henceforth, the instruction sequences of the kind considered in 
\PGAbrbc\ are called \PGAbr\ instruction sequences.

The instantiation of \BTA\ referred to above is called \BTAbr\ (\BTA\ 
with instructions for Boolean registers).
In \BTAbr, the effects of performing a basic action on both the register 
content of Boolean register services and the way in which the modeled 
behaviour proceeds subsequently to performing the basic action concerned
correspond to the intended interpretation of the basic action when it is
considered to be a basic instruction.

The constants and operators of \BTAbr\ are the constants and operators 
of \BTA\ in the case where $\BInstrbr$ is taken as the set $\BInstr$.

The idea behind equality of \BTAbr\ terms is that two closed \BTAbr\ 
terms are equal if they represent threads that can be made the same by 
a number of changes that never influences at any step of the modeled 
behaviour the effects of the basic action performed on the register 
content of Boolean register services and the way in which the modeled 
behaviour proceeds.
Equality of closed terms stands in \BTAbr\ for a kind of congruence of 
the represented threads which originates from the notion of effectual 
equivalence of basic instructions introduced in~\cite{BM15a}.

The axioms of \BTAbr\ are given in Table~\ref{axioms-BTAbr}.%
\begin{table}[!t]
\caption{Axioms of \BTAbr}
\label{axioms-BTAbr}
\begin{eqntbl}
\begin{axcol}
\pcc{x}{f.\mbr{\FFunc}{q}}{y} = \pcc{y}{f.\mbr{\TFunc}{q}}{x}
                                                     & \axiom{BTAbr1} \\
\pcc{x}{f.\mbr{\IFunc}{q}}{y} = \pcc{y}{f.\mbr{\CFunc}{q}}{x}
                                                     & \axiom{BTAbr2} \\
\pcc{x}{f.\mbr{\TFunc}{q}}{y} = \pcc{x}{f.\mbr{p}{q}}{x}
                                                     & \axiom{BTAbr3} 
\end{axcol}
\end{eqntbl}
\end{table}
In this table, 
$f$ stands for an arbitrary focus from $\Foci$, and
$p$ and $q$ stand for arbitrary unary Boolean functions from 
$\Bool \to \Bool$.

Like \BTA, we can extend \BTAbr\ with the projection operators, the 
axioms for the projection operators and AIP.
We write \BTAbrinf\ for the resulting theory.

To make precise which behaviours are produced by \PGAbr\ instruction 
sequences under execution, we combine \PGAbrbc\ with \BTAbrinf\ and 
extend the combination with the thread extraction operator and the 
axioms for the thread extraction operator.

\PGAbr\ instruction sequences are behaviourally equivalent if the 
behaviours that they produce under execution are the same under the 
intended interpretation of the instructions from $\BInstrbr$.

Let $t_1$ and $t_2$ be closed \PGAbrbc\ terms.
Then:
\begin{itemize}
\item
$t$ and $t'$ \hsp{.35}are \emph{behaviourally equivalent}, 
\hsp{.175}written $t \beqv t'$, if $\extr{t} = \extr{t'}$ is derivable
from the axioms of \PGAsc, TE1--TE13, and the axioms of \BTAbrinf.
\item
$t$ and $t'$ are \emph{behaviourally congruent}, written 
$t \bcong t'$, if, for each $l,n \in \Nat$,
$\fjmp{l} \conc t \conc \halt^n \beqv \fjmp{l} \conc t' \conc \halt^n$.
\end{itemize}

It is obvious that,  with this adapted definition of behavioural 
congruence, Theorem~\ref{theorem-soundness} goes through for closed 
\PGAbrbc\ terms and Theorem~\ref{theorem-completeness} goes through for 
closed repetition-free \PGAbrbc\ terms.

\section{Concluding Remarks}
\label{sect-concl}

In program algebra, three congruences on instruction sequences are 
paid attention to: instruction sequence congruence, structural 
congruence, and behavioural congruence.
However, an axiom system for behavioural congruence had never been 
given.
In this paper, we have given an axiom system for behavioural congruence 
and proved its soundness for closed terms and completeness for closed
repetition-free terms.
This means that behavioural congruence of finite instruction sequences
can now be established in a direct way by nothing more than equational 
reasoning.
In earlier work, it had to be established in an indirect way, namely via 
thread extraction, by reasoning that was not purely equational.  
It is an open question whether the axiom system is also complete for 
closed terms in the case where the closed terms considered are not 
restricted to the repetition-free ones.

\appendix

\section{Appendix}
\label{app-proof-theorem-2}

In this appendix, we outline the proof of 
Theorem~\ref{theorem-completeness}.
We do not give full details of the proof because the full proof is 
really tedious.
We have aimed at providing sufficient information in the outline of the
proof to make a reconstruction of the full proof a routine matter.

\subsubsection*{Proof of Theorem~\protect\ref{theorem-completeness}:}
For all closed \PGAbc\ terms $s$, there exists a basic \PGAbc\ term $s'$ 
such that $s = s'$ is derivable from PGA1.
Moreover, for all closed \PGAbc\ terms $s$ and $s'$, $s = s'$ is 
trivially derivable from the axioms of \PGAbc\ if $s \equiv s'$.
By these facts, Proposition~\ref{prop-3CF}, and 
Theorem~\ref{theorem-soundness}, it is sufficient to prove:
\begin{quote}
for all basic repetition-free \PGAbc\ terms $t$ and $t'$ that are in 
third canonical form, $t \equiv t'$ if $t \bcong t'$.
\end{quote}
We prove this by induction on the depth of $t$ and case distinction on 
the form of $t$ according to Lemma~\ref{lemma-basic-3CF}.

The case $t \equiv u$, for $u \in \PInstr$, is trivial because 
$t \bcong t'$ only if $t \equiv t'$.

The case $t \equiv u \conc s$, for $u \in \PInstr$ and basic 
repetition-free \PGAbc\ term $s$ that is in third canonical form, is 
more involved.
It follows immediately from Lemma~\ref{lemma-bcong-3CF} that in this 
case $t \bcong t'$ only if $t' \equiv u' \conc s'$ for some 
$u' \in \PInstr$ and basic repetition-free \PGAbc\ term $s'$ that is in 
third canonical form.
Let $u' \in \PInstr$ and $s'$ be a basic repetition-free \PGAbc\ term 
that is in third canonical form such that $t' \equiv u' \conc s'$.
Then it follows immediately from the definition of $\bcong$ that 
$t \bcong t'$ only if $s \bcong s'$.
Hence, by the induction hypothesis, we have that $t \bcong t'$ only if 
$s \equiv s'$.
We proceed with a case analysis on $(u,u')$.
There exist 25 combinations of kinds of primitive instructions.
In 9 of these combinations, it matters whether the basic instructions
involved are the same and, in 1 of these combinations, it matters 
whether the natural numbers involved are the same.
Hence, in total, there are 35 cases to consider.
However, 5 cases are trivial because in those cases $u \equiv u'$ and
13 cases are covered by a symmetric case.
Of the remaining 17 cases, 9 cases contradict $t \bcong t'$.
Left over are the following 8 cases:
$(u,u') = (\ptst{a},a)$, $(u,u') = (\ntst{a},a)$, 
$(u,u') = (\ptst{a},\ntst{a})$, 
$(u,u') = (\fjmp{l},\ptst{a})$, $(u,u') = (\fjmp{l},\ntst{a})$, 
$(u,u') = (\fjmp{l},a)$, 
$(u,u') = (\fjmp{l},\fjmp{l'})$ with $l \neq l'$, 
$(u,u') = (\fjmp{l},\halt)$.
The proof now continues with a case analysis on $(s,s')$ for each of 
these eight cases, using implicitly the above-mentioned fact that 
$s \equiv s'$ each time that the conclusion is drawn that there is a 
contradiction with $t \bcong t'$.
We will also implicitly use several times the easy to check fact that,
for all basic repetition-free \PGAbc\ terms $r$ that are in third 
canonical form, $\extr{r} \neq \extr{\fjmp{l{+}2} \conc r}$ and
$\extr{r} \neq \extr{u_1 \conc \ldots \conc u_{k+1} \conc r}$ if
$u_1 \equiv a$ or $u_1 \equiv \ptst{a}$ or $u_1 \equiv \ntst{a}$.

In the analysis for the case $(u,u') = (\ptst{a},a)$, we make a case 
distinction on the form of $s$ according to Lemma~\ref{lemma-basic-3CF}:
\begin{itemize}
\item
in the case that $s \equiv v$, we make a further case distinction on the 
form of $v$:
\begin{itemize}
\item
if $v \equiv b$ or $v \equiv \ptst{b}$ or $v \equiv \ntst{b}$, then we 
have $\extr{\ptst{a} \conc v} \neq \extr{a \conc v}$
and hence a contradiction with $t \bcong t'$;
\item
if $v \equiv \fjmp{0}$, then we have
$\extr{\ptst{a} \conc v \conc \halt} \neq \extr{a \conc v \conc \halt}$
and hence a contradiction with $t \bcong t'$;
\item
if $v \equiv \fjmp{1}$, then $t$ is not in third canonical form;
\item
if $v \equiv \fjmp{l{+}2}$, then we have
$\extr{\ptst{a} \conc v \conc \halt} \neq \extr{a \conc v \conc \halt}$
and hence a contradiction with $t \bcong t'$;
\item
if $v \equiv \halt$, then we have
$\extr{\ptst{a} \conc v} \neq \extr{a \conc v}$ 
and hence a contradiction with $t \bcong t'$;
\end{itemize}
\item
in the case that $s \equiv v \conc r$, for some basic repetition-free 
\PGAbc\ term $r$ that is in third canonical form, we make a further case 
distinction on the form of $v$ as well:
\begin{itemize}
\item
if $v \equiv b$ or $v \equiv \ptst{b}$ or $v \equiv \ntst{b}$, then we
have $\extr{\ptst{a} \conc v \conc r} \neq \extr{a \conc v \conc r}$, 
because $r$ is repetition-free, and hence a contradiction with 
$t \bcong t'$;
\item
if $v \equiv \fjmp{0}$, then it follows from $t \bcong t'$ that  
$r \equiv \fjmp{0}$ or $r \equiv \fjmp{0} \conc r'$ for some $r'$ and 
hence $t$ is not in third canonical form;
\item
if $v \equiv \fjmp{1}$, then $t$ is not in third canonical form;
\item
if $v \equiv \fjmp{l{+}2}$, then we make a further case distinction on 
the form of $r$ according to Lemma~\ref{lemma-basic-3CF}:
\begin{itemize}
\item
in the case that $r \equiv w$, we make a further case distinction on the 
form of $w$:
\begin{itemize}
\item
if $w \equiv b$ or $w \equiv \ptst{b}$ or $w \equiv \ntst{b}$, then we
have 
$\extr{\ptst{a} \conc \fjmp{l{+}2} \conc w} \neq 
 \extr{a \conc \fjmp{l{+}2} \conc w}$ 
and hence a contradiction with $t \bcong t'$;
\item
if $w \equiv \fjmp{0}$, then we have 
$\extr{\ptst{a} \conc \fjmp{l{+}2} \conc w \conc \halt^{l{+}1}} \neq 
 \extr{a \conc \fjmp{l{+}2} \conc w \conc \halt^{l{+}1}}$ 
and hence a contradiction with $t \bcong t'$;
\item
if $w \equiv \fjmp{l'{+}1}$ and $l' > l$, then we have
$\extr{\ptst{a} \conc \fjmp{l{+}2} \conc w \conc \halt^{l{+}1}} \neq 
 \extr{a \conc \fjmp{l{+}2} \conc w \conc \halt^{l{+}1}}$ 
and hence a contradiction with $t \bcong t'$;
\item
if $w \equiv \fjmp{l'{+}1}$ and $l' < l$, then we have
$\extr{\ptst{a} \conc \fjmp{l{+}2} \conc w \conc \halt^{l'{+}1}} \neq 
 \extr{a \conc \fjmp{l{+}2} \conc w \conc \halt^{l'{+}1}}$ 
and hence a contradiction with $t \bcong t'$;
\item
if $w \equiv \fjmp{l'{+}1}$ and $l' = l$, then $t$ is not in third 
canonical form;
\item
if $w \equiv \halt$, then we have 
$\extr{\ptst{a} \conc \fjmp{l{+}2} \conc w} \neq 
 \extr{a \conc \fjmp{l{+}2} \conc w}$ 
and hence a contradiction with $t \bcong t'$;
\end{itemize}
\item
in the case that $r \equiv w \conc r'$, for some basic repetition-free 
\PGAbc\ term $r'$ that is in third canonical form, we make a further 
case distinction on the form of $w$ as well:
\begin{itemize}
\item
if $w \equiv b$ or $w \equiv \ptst{b}$ or $w \equiv \ntst{b}$, then we
have 
$\extr{\ptst{a} \conc \fjmp{l{+}2} \conc w \conc r'} \neq 
 \extr{a \conc \fjmp{l{+}2} \conc w \conc r'}$, 
because $r'$ is repetition-free, and hence a con\-tradiction with 
$t \bcong t'$;
\item
if $w \equiv \fjmp{0}$, then it follows from $t \bcong t'$ that  
$r' \equiv w_1 \conc \ldots \conc w_l \conc \fjmp{0}$ or 
$r' \equiv w_1 \conc \ldots \conc w_l \conc \fjmp{0} \conc r''$ for some 
$r''$ and hence $t$ is not in third canonical form;
\item
if $w \equiv \fjmp{l'{+}1}$ and $l' \neq l$, then we
have 
$\extr{\ptst{a} \conc \fjmp{l{+}2} \conc w \conc r'} \neq 
 \extr{a \conc \fjmp{l{+}2} \conc w \conc r'}$, 
because $r'$ is repetition-free, and hence a con\-tradiction with 
$t \bcong t'$;
\item
if $w \equiv \fjmp{l'{+}1}$ and $l' = l$, then $t$ is not in third 
canonical form; 
\item
if $w \equiv \halt$, then it follows from $t \bcong t'$ that  
$r' \equiv w_1 \conc \ldots \conc w_l \conc \halt$ or 
$r' \equiv w_1 \conc \ldots \conc w_l \conc \halt \conc r''$ for some 
$r''$ and hence $t$ is not in third canonical form;
\end{itemize}
\end{itemize}
\item
if $v \equiv \halt$, then it follows from $t \bcong t'$ that  
$r \equiv \halt$ or $r \equiv \halt \conc r'$ for some $r'$ and hence
$t$ is not in third canonical form.
\end{itemize}
\end{itemize}
We conclude from this analysis that, in the case that 
$t \equiv \ptst{a} \conc s$ and $t' \equiv a \conc s$ for some basic 
repetition-free \PGAbc\ term $s$ that is in third canonical form, we 
have a contradiction with $t \bcong t'$.

The analyses for the cases $(u,u') = (\ntst{a},a)$ and 
$(u,u') = (\ptst{a},\ntst{a})$ are similar to the analysis for the case 
$(u,u') = (\ptst{a},a)$.

In the analysis for the case $(u,u') = (\fjmp{l},a)$, we make a case 
distinction on $l$:
\begin{itemize}
\item
if $l = 0$, then we have 
$\extr{\fjmp{l} \conc s} \neq \extr{a \conc s}$
and hence a contradiction with $t \bcong t'$;
\item
if $l = 1$, then we have 
$\extr{\fjmp{l} \conc s} \neq \extr{a \conc s}$,
because $s$ is repetition-free, and hence a contradiction with 
$t \bcong t'$;
\item
if $l = l' + 2$, then we make a further case distinction on the form of 
$s$ according to Lemma~\ref{lemma-basic-3CF}:
\begin{itemize}
\item
in the case that $s \equiv v$, we have 
$\extr{\fjmp{l} \conc s} \neq \extr{a \conc s}$
and hence a contradiction with $t \bcong t'$;
\item
in the case that $s \equiv v \conc r$, for some basic repetition-free 
\PGAbc\ term $r$ that is in third canonical form, it follows from 
$t \bcong t'$ that 
$r \equiv v_1 \conc \ldots \conc v_{l'} \conc a \conc r'$ 
for some basic repetition-free \PGAbc\ term $r'$ that is in third 
canonical form and we make a further case distinction on the form of 
$v$:
\begin{itemize}
\item
if $v \equiv b$ or $v \equiv \ptst{b}$ or $v \equiv \ntst{b}$, then we
have 
$\extr{\fjmp{l'{+}2} \conc v \conc v_1 \conc \ldots \conc v_{l'} \conc
 a \conc r'} \neq
 \extr{a \conc v \conc v_1 \conc \ldots \conc v_{l'} \conc a \conc r'}$, 
because $r'$ is repetition-free, and hence a contradiction with 
$t \bcong t'$;
\item
if $v \equiv \fjmp{0}$, then it follows from $t \bcong t'$ that  
$r' \equiv \fjmp{0}$ or $r' \equiv \fjmp{0} \conc r''$ for some $r''$ 
and hence $t$ is not in third canonical form;
\item
if $v \equiv \fjmp{l''{+}1}$ and $l'' \neq l' + 1$, then we have 
$\extr{\fjmp{l'{+}2} \conc v \conc v_1 \conc \ldots \conc v_{l'} \conc
       a \conc r'} \neq
 \extr{a \conc v \conc v_1 \conc \ldots \conc v_{l'} \conc a \conc r'}$, 
because $r'$ is repetition-free, and hence a contradiction with 
$t \bcong t'$;
\item
if $v \equiv \fjmp{l''{+}1}$ and $l'' = l' + 1$, then $t$ is not in 
third canonical form;
\item
if $v \equiv \halt$, then it follows from $t \bcong t'$ that  
$r' \equiv \halt$ or $r' \equiv \halt \conc r''$ for some $r''$ and 
hence $t$ is not in third canonical form.
\end{itemize}
\end{itemize}
\end{itemize}
We conclude from this analysis that, in the case that
$t \equiv \fjmp{l} \conc s$ and $t' \equiv a \conc s$ for some basic 
repetition-free \PGAbc\ term $s$ that is in third canonical form, we 
have a contradiction with $t \bcong t'$.

The analyses for the cases $(u,u') = (\fjmp{l},\ptst{a})$ and 
$(u,u') = (\fjmp{l},\ntst{a})$ are similar to the analysis for the case 
$(u,u') = (\fjmp{l},a)$.

In the analyses for the cases $(u,u') = (\fjmp{l},\fjmp{l'})$, with 
$l \neq l'$, and $(u,u') = (\fjmp{l},\halt)$, we use the function 
$\len$, which assigns to each closed repetition-free \PGAbc\ term the 
length of the instruction sequence that it represents.
This function is recursively defined as follows: $\len(u) = 1$ and 
$\len(t \conc t') = \len(t) + \len(t')$. 

In the analysis for the case $(u,u') = (\fjmp{l},\fjmp{l'})$ with 
$l \neq l'$, we only consider the case $l < l'$ (because the cases 
$l < l'$ and $l > l'$ are symmetric) and make a case distinction on $l$:
\begin{itemize}
\item
if $l = 0$, then we have 
$\extr{\fjmp{0} \conc s} \neq \extr{\fjmp{l'} \conc s}$
and hence a contradiction with $t \bcong t'$;
\item
if $0 < l \leq \len(s)$, then it follows from $t \bcong t'$ that 
$s \equiv 
 u_1 \conc \ldots \conc u_{l{-}1} \conc a \conc
 v_1 \conc \ldots \conc v_{l'{-}(l{+}1)} \conc a \conc r$ 
for some basic repetition-free \PGAbc\ term $r$ that is in third 
canonical form and we make a further case distinction on the form 
of $v$:
\begin{itemize}
\item
if $v_1 \equiv b$ or $v_1 \equiv \ptst{b}$ or $v_1 \equiv \ntst{b}$, 
then we have 
$\extr{\fjmp{l} \conc u_1 \conc \ldots \conc u_{l{-}1} \conc a \conc
       \linebreak[2]
       v_1 \conc \ldots \conc v_{l'{-}(l{+}1)} \conc a \conc r} \neq
 \extr{\fjmp{l'} \conc u_1 \conc \ldots \conc u_{l{-}1} \conc a \conc
       v_1 \conc \ldots \conc v_{l'{-}(l{+}1)} \conc a \conc r}$, 
because $r$ is repetition-free, and hence a contradiction with 
$t \bcong t'$;
\item
if $v_1 \equiv \fjmp{0}$, then it follows from $t \bcong t'$ that  
$r \equiv \fjmp{0}$ or $r \equiv \fjmp{0} \conc r'$ for some $r'$ 
and hence $t$ is not in third canonical form;
\item
if $v_1 \equiv \fjmp{l''{+}1}$ and $l'' \neq l' - l$, then we have
$\extr{\fjmp{l} \conc u_1 \conc \ldots \conc u_{l{-}1} \conc a \conc
       \linebreak[2]
       v_1 \conc \ldots \conc v_{l'{-}(l{+}1)} \conc a \conc r} \neq
 \extr{\fjmp{l'} \conc u_1 \conc \ldots \conc u_{l{-}1} \conc a \conc
       v_1 \conc \ldots \conc v_{l'{-}(l{+}1)} \conc a \conc r}$, 
because $r$ is repetition-free, and hence a contradiction with 
$t \bcong t'$;
\item
if $v_1 \equiv \fjmp{l''{+}1}$ and $l'' = l' - l$, then $t$ is not in 
third canonical form;
\item
if $v_1 \equiv \halt$, then it follows from $t \bcong t'$ that  
$r \equiv \halt$ or $r \equiv \halt \conc r'$ for some $r'$ 
and hence $t$ is not in third canonical form;
\end{itemize}
\item
if $l > \len(s)$, then we have 
$\extr{\fjmp{0} \conc s \conc \halt^{l{-}\len(s)}} \neq
 \extr{\fjmp{l'} \conc s \conc \halt^{l{-}\len(s)}}$
and hence a contradiction with $t \bcong t'$.
\end{itemize}
We conclude from this analysis that, in the case that
$t \equiv \fjmp{l} \conc s$ and $t' \equiv \fjmp{l'} \conc s$, with 
$l \neq l'$, for some basic repetition-free \PGAbc\ term $s$ that is in 
third canonical form, we have a contradiction with $t \bcong t'$.

In the analysis for the case $(u,u') = (\fjmp{l},\halt)$, we make a case 
distinction on $l$:
\begin{itemize}
\item
if $l = 0$, then we have 
$\extr{\fjmp{l} \conc s} \neq \extr{\halt \conc s}$
and hence a contradiction with $t \bcong t'$;
\item
if $0 < l \leq \len(s)$, then it follows from $t \bcong t'$ that 
$s \equiv u_1 \conc \ldots \conc u_{l{-}1} \conc \halt$ 
or
$s \equiv u_1 \conc \ldots \conc u_{l{-}1} \conc \halt \conc r$ 
for some $r$ and hence $t$ is not in third canonical form;
\item
if $l > \len(s)$, then we have 
$\extr{\fjmp{l} \conc s} \neq \extr{\halt \conc s}$
and hence a contradiction with $t \bcong t'$.
\end{itemize}
We conclude from this analysis that, in the case that
$t \equiv \fjmp{l} \conc s$ and $t' \equiv \halt \conc s$ for some basic 
repetition-free \PGAbc\ term $s$ that is in third canonical form, we 
have a contradiction with $t \bcong t'$.

From the conclusions of the analyses, it follows immediately that for 
all basic repetition-free \PGAbc\ terms $t$ and $t'$ that are in third 
canonical form, $t \equiv t'$ if $t \bcong t'$.
\qed

We conclude this appendix by going into the main problem that we have 
experienced in mastering the intricacy of a proof of the generalization 
of Theorem~\ref{theorem-completeness} from all closed repetition-free 
\PGAbc\ terms to all closed \PGAbc\ terms.

In the proof of Theorem~\ref{theorem-completeness}, case distinctions 
are made on a large scale.
It frequently occurs that the number of cases to be distinguished is 
kept small by making use of Lemma~\ref{lemma-basic-3CF}.
To devise and prove a generalization of this lemma that is not 
restricted to repetition-free terms is not a big problem.
In the proof of Theorem~\ref{theorem-completeness}, something of the 
following form occurs at many places: 
``we have $\extr{s} \neq \extr{s'}$ because $r$ is repetition-free, 
and hence a contradiction with $t \bcong t'$''.
At several similar places in the proof of the generalization of this 
theorem, $r$ is not repetition-free and $\extr{s} \neq \extr{s'}$ 
requires an elaborate proof. 
In some of these proofs, no use can be made of the generalization of 
Lemma~\ref{lemma-basic-3CF} and one gets completely lost in the many 
deeply nested case distinctions.
This is the main problem that we have experienced.

\bibliography{IS}

\end{document}